\documentclass[aps,superscriptaddress,twocolumn,twoside,floatfix,pra,nofootinbib,a4paper]{revtex4-2}

\usepackage{times}
\usepackage{dsfont}
\usepackage{amsfonts}
\usepackage{amsmath}
\usepackage{amssymb}
\usepackage{amsthm}
\usepackage{multirow}
\usepackage[normalem]{ulem}
\usepackage[T1]{fontenc}
\usepackage{float}
\usepackage{mathtools}
\usepackage{bm}
\usepackage[UKenglish,cleanlook]{isodate}   
\usepackage{mathrsfs}
\usepackage{amsthm}
\usepackage{comment}

\usepackage[ruled,vlined]{algorithm2e}

\newtheorem{theorem}{Theorem}
\theoremstyle{plain}

\providecommand{\theoremname}{Theorem}
\usepackage{physics}

\usepackage{xcolor}
\definecolor{carmine}{RGB}{150,0,24}

\newcommand{\headeritem}[1]{\item \textbf{#1}\quad}
\usepackage{natbib}
\usepackage[colorlinks=true,linkcolor=blue,citecolor=magenta,urlcolor=blue]{hyperref}

\newcommand{\bracket}[3]{\langle#1|#2|#3\rangle}

\usepackage{caption}
\usepackage{graphicx}
\usepackage{tikz-cd}

\begin{document}
	

\title{Certifying coherence in quantum devices under classical control}

\author{Gabriele Cobucci}\thanks{These authors contributed equally}
\affiliation{Physics Department and NanoLund, Lund University, Box 118, 22100 Lund, Sweden.}

\author{Nicola D'Alessandro}\thanks{These authors contributed equally}
\affiliation{Physics Department and NanoLund, Lund University, Box 118, 22100 Lund, Sweden.}

\author{Raphael Brinster}		
\affiliation{Institut f\"ur Theoretische Physik III, Heinrich-Heine-Universit\"at D\"usseldorf,
	Universit\"atsstra{\ss}e 1, 40225 D\"usseldorf, Germany}
	
\author{Alexander Bernal}
\affiliation{Physics Department and NanoLund, Lund University, Box 118, 22100 Lund, Sweden.}	
	
\author{Nikolai Wyderka}		
\affiliation{Institut f\"ur Theoretische Physik III, Heinrich-Heine-Universit\"at D\"usseldorf,
	Universit\"atsstra{\ss}e 1, 40225 D\"usseldorf, Germany}

\author{Armin Tavakoli}		
\affiliation{Physics Department and NanoLund, Lund University, Box 118, 22100 Lund, Sweden.}

\begin{abstract}
Quantum states that do not commute exhibit coherence, but only when the device preparing them is assumed to be unaffected by classical parameters inaccessible to the experimenter. Such hidden classical control arises both in fundamental tests of quantum phenomena and in quantum information protocols that operate under limited control assumptions. Here, we address the problem of coherence certification by developing complete and practically efficient methods. First, we prove that coherence can be fully characterised through a hierarchy of semidefinite programs. Second, we introduce a practical semidefinite programming approach that achieves useful accuracy while remaining computationally efficient even for preparation devices generating many, potentially high-dimensional, quantum states. For the important special case of qubits, we further exploit conceptual connections with the theory of joint measurability to obtain highly accurate coherence characterisation that scales to more than one thousand qubits. Finally, we apply these methods to determine whether quantum channels are able to preserve coherence or are inherently coherence-breaking. Together, these results provide a powerful toolbox for analysing quantum superposition in the presence of hidden classical control.
\end{abstract}

\date{\today}

\maketitle

\section{Introduction}

Superposition lies at the heart of quantum theory and constitutes an essential resource for quantum science and technology. However, certifying superposition crucially depends on the assumptions made about the system. In its simplest formulation, a quantum state that is not diagonal in a chosen reference basis—representing the states accessible to classical physics—is said to be in superposition and is therefore called coherent. Quantifying, transforming, and interpreting coherence in this framework has attracted considerable research attention \cite{Baumgratz2014, Streltsov2017}. This perspective, however, comes with a conceptual limitation: coherence becomes a basis-dependent notion. This raises the question of how coherence can be certified in an absolute sense, without specifying a privileged basis. Since any quantum state is diagonal in its own eigenbasis, basis-independent coherence cannot be a property of a single state but must instead arise from relations between multiple states. In particular, a set of states admits a common diagonalising basis if and only if the states commute pairwise. Consequently, non-commuting states necessarily exhibit coherence, and this has been quantified using tools from resource theory \cite{Designolle2021}.  However, this too comes with a conceptual limitation: it requires us to have a complete description of the system. In other words, if some parts of an incoherent system are inaccessible to us, the parts of the states that we have access to may be non-commuting, leading us to falsely conclude that the system is coherent. 

The states in a given set can be viewed as on-demand outputs of a quantum preparation device. Assessing their coherence may therefore require accounting for the possible preparation mechanisms involved in their generation. As in significant areas of quantum foundations and modern quantum information protocols, this may involve inaccessible classical control parameters—traditionally referred to as hidden variables. In foundational studies, hidden variables arise in scenarios such as Bell tests \cite{Brunner2014}, quantum steering \cite{Uola2020}, and prepare-and-measure experiments \cite{PMreview}. In quantum information protocols, they capture situations in which a quantum device is subject to a remote classical influence that is hidden from the user. This perspective arises naturally, for example, in entanglement-assisted classical communication \cite{Buhrmann2010}, in benchmarking quantum communication advantages \cite{PMreview}, and in quantum random number generation \cite{Pironio2010, Shalm2021}. Such classical parameters can stochastically modify the programming of a quantum device according to a predetermined rule. Although generating this type of classical control is comparatively inexpensive, access to it is known to significantly enhance the explanatory power of classical models (see, e.g., \cite{Bowles2014, Vicente2017}).

This enhancement of classical models also appears when studying coherence. Suppose an inaccessible classical parameter, $\lambda$, determines the basis in which a classical preparation device generates commuting states. By marginalising over 
$\lambda$, the device can simulate sets of states that appear non-commuting to the user. As an example, consider noisy preparations of the positive eigenstates of the Pauli $X$ and $Z$ observables, $\rho_1=v \ketbra{0}+\frac{1-v}{2}\mathds{1}$ and $\rho_2=v \ketbra{+}+\frac{1-v}{2}\mathds{1}$, where $\ket{+}=\frac{1}{\sqrt{2}}\left(\ket{0}+\ket{1}\right)$ and $x\in\{1,2\}$ labels the desired preparation $\rho_x$. These states commute only when the visibility parameter $v\in[0,1]$ is zero. Nevertheless, a simple classical process can simulate them for a much wider range of parameters. Let $\lambda\in\{1,2\}$ be uniformly distributed. When $\lambda=1$, the device generates commuting states whose Bloch vectors are $\pm\frac{1}{\sqrt{2}}(1,0,-1)$ for $x=0,1$, respectively. When $\lambda=2$, the device always produces the state with Bloch vector $\frac{1}{\sqrt{2}}(1,0,1)$, which trivially commutes with itself. After marginalising $\lambda$, the states observed by the user coincide with $\{\rho_1,\rho_2\}$ when $v=\frac{1}{\sqrt{2}}$ (see Fig. \ref{Fig:simulation}). Thus, for $0 < v\leq \frac{1}{\sqrt{2}}$ , the observed states do not commute, yet they admit a classical model, which when conditioned on $\lambda$ is based entirely on incoherent preparations. Models of this general type were recently introduced in Ref~\cite{Cobucci2026}, but methods to certify coherence are limited to just a few simple cases. 

\begin{figure}[t!]
	\centering
	\includegraphics[width=1\linewidth]{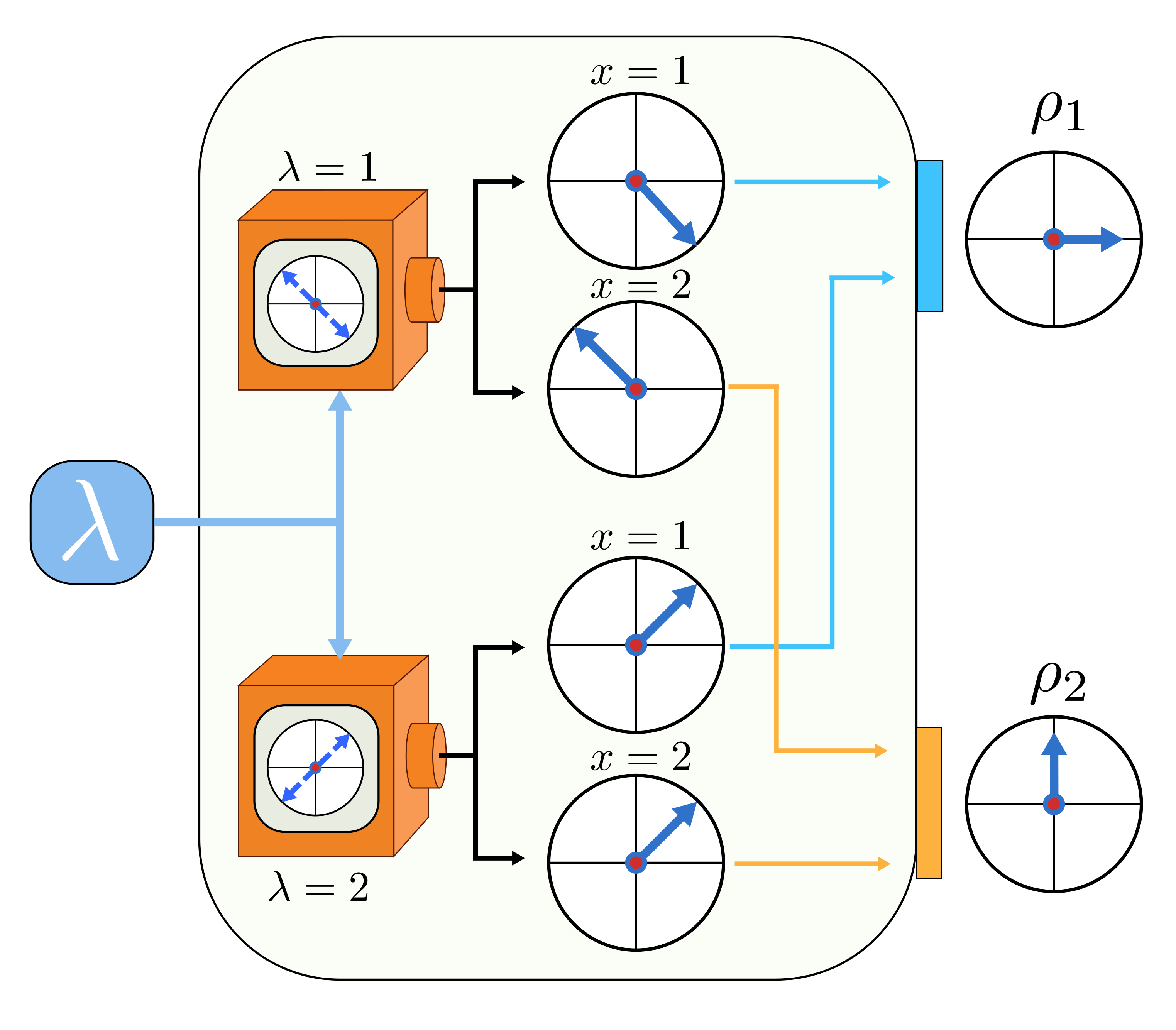}
	\caption{Classical model simulating the non-commuting states $\rho_1=v \ketbra{0}+\frac{1-v}{2}\mathds{1}$ and $\rho_2=v \ketbra{+}+\frac{1-v}{2}\mathds{1}$ for $v=\frac{1}{\sqrt{2}}$, where $\ket{+}=\frac{1}{\sqrt{2}}\left(\ket{0}+\ket{1}\right)$. A classical parameter $\lambda=\lbrace 1,2 \rbrace$ uniformly chooses a classical device which generates commuting states whose Bloch vectors are represented in the figure. The diagonal in each box denotes the basis in which each device operates.}
	\label{Fig:simulation}
\end{figure}

This leads to a natural question: how can one certify that a preparation device is genuinely coherent in the presence of hidden classical parameters? In this work, we develop general and versatile methods to address this problem using relaxation techniques based on semidefinite programming \cite{Tavakoli2024}. We begin by proving that coherence admits a complete characterisation through a hierarchy of semidefinite programs (SDPs). Although this necessary and sufficient condition is conceptually appealing, it is less suitable for benchmarking state-of-the-art quantum preparation devices. To address the practical setting, we therefore introduce an alternative SDP condition that is sufficient for coherence and can be applied to arbitrary sets of states. This method achieves useful accuracy while remaining computationally efficient even for large numbers of preparations and for systems with Hilbert space dimensions well into the double-digit regime. For the important special case of qubit states, we further identify connections between coherence certification and results from the literature on joint measurability \cite{Guhne2023}. Exploiting these connections allows us to analyse coherence using tools originally developed for measurement compatibility. This leads to highly accurate certification for very large sets of qubits and enables analogous certification of the absence of coherence. 
Finally, since realistic quantum information protocols are often subjected to noisy and lossy environments, we shift our focus to the characterisation of coherence over quantum channels. We show how the tools developed here can be used to certify whether a channel is capable of preserving coherence, or whether it inevitably destroys it. This distinction leads us to the notion of coherence-breaking channels.

\section{Preliminaries \& basic results}
	Consider a preparation device that emits a set of $N$ quantum states, $\mathcal{E}=\{\rho_x\}_{x=1}^N$, where each state has dimension $d$. For simplicity, we have chosen the set to be of finite size but it can equally well be uncountably infinite. Our goal is to determine whether the states $\mathcal{E}$ are coherent when we allow the preparation device to have hidden classical parameters, i.e.~the device can be classically pre-programmed with a hidden variable, $\lambda$. Such models were formalised in Ref~\cite{Cobucci2026}: they can sample $\lambda$ from a probability density $q(\lambda)$ and associate to it a preparation device that takes $x$ as input and produces a set of states $\{\tau_{x,\lambda}\}_x$ which is incoherent in the standard sense that it is jointly diagonalisable. Thus, it holds that  
	\begin{equation}\label{commutation}
		\left[\tau_{x,\lambda},\tau_{x',\lambda}\right]=0, \quad \forall x, x'.
	\end{equation}
	Since $\lambda$ is not accessible to the user, the state seen is not the commuting collection $\{\tau_{x,\lambda}\}_{x,\lambda}$ but instead the average state 
	\begin{equation}\label{classicalmodel}
		\rho_x=\int d\lambda\ q(\lambda) \tau_{x,\lambda},
	\end{equation}
	which may be non-commuting. We let  $\mathcal{C}$ label the set of all $\mathcal{E}$ that have a decomposition \eqref{classicalmodel}  satisfying Eq~\eqref{commutation}. If $\mathcal{E}\in \mathcal{C}$ we say that the $\mathcal{E}$ is incoherent, whereas otherwise it is coherent. 
	
	Let us highlight some  basic properties of the set $\mathcal{C}$. Firstly, it is convex and closed: given two incoherent sets $\mathcal{E} = \lbrace \rho_x \rbrace_{x}$ and $\mathcal{E}' = \lbrace \sigma_y \rbrace_{y}$, then also $\lbrace p \,\rho_x + (1-p) \, \sigma_y\rbrace_{x,y} \in \mathcal{C}$  $\forall p \in [0,1]$. Secondly, since coherence is a relational property it follows that for any single state (i.e.~when $|\mathcal{E}|=1$) it holds that $\mathcal{E}\in\mathcal{C}$. Thirdly, if all states in $\mathcal{E}$ are pure, the incoherent model reduces to standard commutation, namely $[\rho_x,\rho_{x'}]=0$ $\forall x,x'$. Finally, consider that we have a set of states $\{\rho_x\}_x$ and that we obtain another set of states $\{\rho'_{z}\}_z$ by classically wiring the formers, i.e.~$\rho_z'=\sum_x p(x\lvert z)\rho_x$. Then, it follows that  $\{\rho'_{z}\}_z\in\mathcal{C}$ if $\{\rho_{x}\}_x\in\mathcal{C}$. Indeed, decomposing $\rho_x$ as in Eq~\eqref{classicalmodel}, we can define the states $\varphi_{z,\lambda} \equiv \sum_x p(x|z) \tau_{x,\lambda}$ and notice that they commute, i.e. $[\varphi_{z,\lambda},\varphi_{z',\lambda}] = 0$ $\forall z,z'$. Therefore, since each state $\rho'_{z}$ admits a decomposition of the form in \eqref{classicalmodel} using the states $\lbrace \varphi_{z,\lambda}\rbrace_{z}$, it follows that $\lbrace \rho'_{z} \rbrace_{z} \in \mathcal{C}$.

	\subsection{Quantifiers}
	There are many different  ways to quantify  the coherence of $\mathcal{E}$. One possible choice that is common in resource theories is a generalised robustness measure, namely $v_\text{worst}(\mathcal{E}) = \max\left\{0\leq v\leq 1 \quad \text{s.t.}\quad \left\{v\rho_x+(1-v)\sigma_x\right\}_x \in\mathcal{C}\right\}$,
	where $\{\sigma_x\}_x$ is an arbitrary set of states. It represents the largest possible visibility that $\mathcal{E}$ can have  in order to eliminate all coherence. While this is mathematically convenient, it is physically less natural since it identifies only the most destructive noise possible, regardless of its physical justification. In view of that, it is often more motivated to  consider an unbiased (isotropic) noise model, which corresponds to selecting $\sigma_x=\frac{1}{d}\openone$. This leads to the coherence quantifier 
	\begin{equation}\label{robustness_iso}
		v^*(\mathcal{E}) = \max \left\{0\leq v\leq 1 \quad \text{s.t.}\quad \big\{v\rho_x+\frac{1-v}{d}\openone \big\}_x \in\mathcal{C}\right\}.
	\end{equation}
	This quantifier is faithful by construction: one has $v^*=1$ if and only if $\mathcal{E}\in\mathcal{C}$. Note that a smaller $v^*$ means a larger amount of coherence.

\subsection{Coherence-breaking channels}\label{Sec:CBC}
Consider that we are given a quantum channel, i.e.~a completely positive trace-preserving map $\Lambda:\mathcal{D}(\mathbb{C}^d)\rightarrow \mathcal{D}(\mathbb{C}^{d'})$, where we label by $\mathcal{D}(\mathbb{C}^s)$ the set of density matrices over an $s$-dimensional Hilbert space. We are interested in deciding whether $\Lambda$ has the capability to preserve coherence when states are sent through it. To answer this, we define the image of $\Lambda$ as  the set
\begin{align}\label{imageofchannel}
\mathcal{E}_\Lambda\equiv \{\Lambda(\rho) \quad \text{s.t.} \quad \rho \in \mathcal{D}(\mathbb{C}^d)\},
\end{align}
We say that $\Lambda$ is coherence-breaking (CB) if its image satisfies $\mathcal{E}_\Lambda\in\mathcal{C}$. This means that if we were to send all quantum states through the channel, the resulting set of states  would admit an incoherent model. Without loss of generality, one can simplify this definition to consider only pure states, $\psi$, namely 
\begin{equation}
	\Lambda \text{ is coherence-breaking} \quad \Leftrightarrow\quad \{\Lambda(\psi)\}_\psi\in\mathcal{C}\quad,
\end{equation}
where $\psi$ runs over all pure states in dimension $d$.  

 While CB channels are first defined here, we note that it was implicitly proven in Ref~\cite{Cobucci2026} that the depolarisation channel $\Lambda^{\text{depol}}_v(X)=vX+\frac{1-v}{d}\tr(X)\mathds{1}$ is CB if and only if $v\leq \frac{H_d-1}{d-1}$, where $H_d=\sum_{k=1}^{d}\frac{1}{k}$ is the Harmonic number. For more general channels, proving the CB property requires more general tools. An elementary result is that there is no strict relation between coherent-breaking channels and entanglement-breaking channels. One direction is trivial; consider a qubit channel that measures $\sigma_Z$ and prepares $\ket{0}$ or $\frac{\ket{0}+\ket{1}}{\sqrt{2}}$ depending on the outcome. Evidently, it is entanglement-breaking. However it is not CB because it maps the states $\{\ket{0},\ket{1}\}$ into $\{\ket{0},\frac{\ket{0}+\ket{1}}{\sqrt{2}}\}$ which are coherent since they are pure and non-commuting. The channel has simply generated the coherence itself. The other direction is exemplified by the depolarisation channel, which is known to be entanglement-breaking if and only if $v\leq \frac{1}{d-1}$. Since this number is smaller than $\frac{H_d-1}{d-1}$, there is a range of $v$ where the depolarisation channel is CB but not entanglement-breaking. More generally, define channels of the form  
	\begin{equation}
		\Lambda(X)=\sum_{x=0}^{d-1} \bracket{e_x}{X}{e_x} \sigma_x,
	\end{equation}
	where $\{\ket{e_x}\}_x$ is a basis and  $\sigma_x$ is a set of commuting states. These are both entanglement-breaking and CB. The latter follows because, since from Born's rule $\bracket{e_x}{\psi}{e_x} = p(x|\psi)$ for each state $\psi$, we can define $\tau_{\psi} = \sum_x p(x|\psi)\sigma_x$ as states such that $[\tau_{\psi},\tau_{\psi'}] = 0$ for each $\psi,\psi'$. Therefore, the set $\lbrace \Lambda(\psi) \rbrace_{\psi}$, consisting of commuting states, is always incoherent.

\section{Coherence certification}
In this section, we develop SDP methods to certify quantum coherence. We first develop an SDP hierarchy that gives a sequence of outer approximations that converges to the set $\mathcal{C}$. This provides a necessary and sufficient criterion for coherence. Next, we develop a different SDP relaxation of $\mathcal{C}$, which serves as a practically useful tool for certifying coherence in state-of-the-art quantum preparation devices. We benchmark the performance of these methods in case studies. Then, we focus exclusively on qubit systems and show how dedicated convex programming methods enable highly accurate and scalable coherence certification. Finally, we adapt these tools to certify quantum channels; both when they support coherence and when they are CB.

\subsection{Complete hierarchy}\label{Sec:complete}
We begin with constructing a converging hierarchy of outer SDP relaxations of $\mathcal{C}$. This means that we define a sequence of sets $\mathcal{C}_m$ for $m=2,\ldots,\infty$, each computable as an SDP, such that
\begin{equation}
\mathcal{C}_2 \supset \mathcal{C}_3 \supset \ldots \supset \mathcal{C}_\infty = \mathcal{C}.
\end{equation}

To convey the main ideas behind the hierarchy, we first focus on the lowest relaxation level ($m=2$). For this, we define bipartite operators of the form
\begin{equation}\label{Operatordef}
O_{xy}=\int d\lambda \ q(\lambda) \tau_{x,\lambda}\otimes \tau_{y,\lambda}.
\end{equation}
These are trivially positive semidefinite. They are also separable, which we relax to a semidefinite condition: that they have a  positive partial-transpose. Moreover, their marginals are constrained because it follows from Eq~\eqref{classicalmodel} that 
\begin{align}\label{constraints1}\nonumber
& \tr_1(O_{xy})=\int d\lambda \ q(\lambda)  \tau_{y,\lambda} =\rho_y, \\
& \tr_2(O_{xy})=\int d\lambda \ q(\lambda) \tau_{x,\lambda}=\rho_x,
\end{align}
where $\tr_i$ denotes the partial trace over the $i$-th subsystem. The second of these constraints can be dropped if we establish the symmetry relation $O_{xy} = V_{(12)} O_{yx} V_{(12)}$, where $V_{(12)}=\sum_{ij}\ketbra{ij}{ji}$ is the swap operator. Note that it can be expressed as $V_{(12)}=d(\phi^+)^{T_1}$, where $\phi^+=\ketbra{\phi^+}$ and $\ket{\phi^+}=\frac{1}{\sqrt{d}}\sum_{i=0}^{d-1}\ket{ii}$ is the maximally entangled state. It follows that for any linear operators $X$ and $Y$,
\begin{equation}
\tr_1\left(V_{(12)} X_1\otimes Y_2\right)=XY.
\end{equation}
Applying this to the operators defined in Eq~\eqref{Operatordef}, we obtain the constraints
\begin{align}\label{constraints2}\nonumber
\tr_1\left(V_{(12)} O_{xy}\right)&=\int d\lambda \ q(\lambda) \tau_{x,\lambda}\tau_{y,\lambda}\\
&=\int d\lambda \ q(\lambda) \tau_{y,\lambda}\tau_{x,\lambda}=\tr_2\left(V_{(12)} O_{xy}\right),
\end{align}
where we have used the commutation relation in Eq~\eqref{commutation}. 

In summary, we have obtained the lowest level of the SDP relaxation hierarchy, namely the set $\mathcal{C}_2$. It is given by the following program. Given $\mathcal{E}=\{\rho_x\}_{x=1}^N$,
\begin{align}\nonumber\label{lvl1complete}
\text{find} \quad & \{O_{xy}\}_{x,y=1}^N \\\nonumber
\text{s.t} \quad 
& \tr_1\left(O_{xy}\right)=\rho_y, \quad\forall x,y \\\nonumber
& O_{xy}=V_{(12)} O_yx V_{(12)}\\\nonumber
& \tr_1\left(V_{(12)}O_{xy}\right) =\tr_2\left(V_{(12)}O_{xy}\right), \quad\forall x,y\\ \nonumber
& \left(O_{xy}\right)^{T_1}\succeq 0,\quad \forall x,y\\
& O_{xy} \succeq 0, \quad \forall x,y.
\end{align}
If this program is infeasible, we conclude that $\mathcal{E}\notin \mathcal{C}$ and hence the states are coherent.  If instead the SDP is feasible, we may try to certify its coherence using a higher relaxation level in the hierarchy.

To extend the construction of the lowest level to the $m$'th level of the hierarchy, we consider operators of the form
\begin{align}
O_{x_{1}\ldots x_{m}}=\int d\lambda\ q(\lambda)\tau_{x_{1},\lambda}\otimes\ldots\otimes\tau_{x_{m},\lambda},
\end{align}
for $x_1,\ldots,x_m\in\{1,\ldots,N\}$. Just as before, these operators are positive semidefinite, they have positive partial-transpose with respect to any subset of the $m$ systems, and by tracing out all but the $i$'th subsystem we recover the state $\rho_{x_i}$. For $m>2$ also additional constrains compatible with SDP are relevant. We give a full description of the hierarchy in Appendix~\ref{app:completehierarchy}. There, we also prove the key feature of this hierarchy, namely that it converges to the set $\mathcal{C}$ in the limit of large $m$ [see Theorem~\ref{thm:completeness}]. The main idea in this proof is based on establishing a correspondence between sets of incoherent states and quantum measurements that admit a decomposition in terms of measurements with commuting effects, which allows for a characterisation using de Finetti theorems for POVMs established in \cite{pusey2013quantum}. This result means that every coherent $\mathcal{E}$ is certified at some level of the SDP hierarchy. Conceptually, this reveals that coherence can be fully characterised with the  framework of semidefinite programming.

\subsection{Practical coherence criterion}\label{Sec:practical}
While the SDP relaxation hierarchy in the previous subsection provides a complete characterisation of coherence, it is not particularly efficient from a practical point of view. Therefore, we develop also a second  SDP method, which is incomplete but tailored to be practically more useful for certifying coherence in realistic sets of quantum states. The method draws on the SDP relaxation methodology based on block moment matrices, which was recently introduced in Ref~\cite{dalessandro2026}, and may be viewed as a generalisation of the Navascu\'es-Piornio-Ac\'in hierarchy \cite{Pironio2010_2}.

If $\mathcal{E}\in\mathcal{C}$, we can without loss of generality take  all the states $\tau_{x,\lambda}$ appearing in the decomposition \eqref{classicalmodel} to be pure. Say that we select one specific value of $\lambda$ and define the ordered list $L=\{\mathds{1}\}\cup\{\tau_{x,\lambda}\}_{x=1}^N$. Let $R\supset L$ be a set of monomials defined over $L$, i.e.~$R$ contains products of the elements appearing in $L$. For simplicity, let $R$ contain all products up to length  $m$. We now define the block moment matrix  
\begin{equation}
\Gamma^{(\lambda)}=\sum_{u,v\in R} \ketbra{i_u}{i_v}\otimes uv^\dagger,
\end{equation}
where $i_u$ ($i_v$) label the position of the monomial $u$ ($v$) in the list $R$. By construction, $\Gamma^{(\lambda)}$ is positive semidefinite. Let us now consider the  lowest relaxation level, namely $m=1$, for which $R=L$. The block moment matrix becomes 
\begin{equation}\label{BMM_lambda}
\Gamma^{(\lambda)}=\begin{pmatrix}
\mathds{1} & \tau_{1,\lambda} & \tau_{2,\lambda} & \ldots  & \tau_{N,\lambda}  \\
& \tau_{1,\lambda} & M^\lambda_{12} & \ldots & M^\lambda_{1N} \\
& & \tau_{2,\lambda} & \ldots & M^\lambda_{2N} \\
& & & \ddots & \vdots \\
& & & & \tau_{N,\lambda}
\end{pmatrix}.
\end{equation}
On the diagonal, we have used that the states are pure, namely $\tau_{x,\lambda}^2=\tau_{x,\lambda}$. Also, we  have introduced the optimisation variable  $M^\lambda_{ij}=\tau_{i,\lambda}\tau_{j,\lambda}$. In general, the product of two pure states is a non-Hermitian matrix. However, since $\mathcal{E}\in\mathcal{C}$ it follows that  the condition \eqref{commutation} is satisfied, i.e.~$[\tau_{i,\lambda},\tau_{j,\lambda}]=0$. The product of two commuting  states is a Hermitian positive semidefinite matrix. Hence, we have $M^\lambda_{ij}=(M^\lambda_{ij})^\dagger$ and $M^\lambda_{ij}\succeq 0$. Moreover, since  $\tau_{i,\lambda}(\mathds{1} - \tau_{j,\lambda}) \succeq 0$, we also have that $\tau_{i,\lambda} \succeq M_{ij}^{\lambda}$, and similarly $\tau_{j,\lambda} \succeq M_{ij}^{\lambda}$. We can now define a new block moment matrix as
\begin{equation}\label{BMM}
\Gamma=\int d\lambda \ q(\lambda) \Gamma^{(\lambda)} = \begin{pmatrix}
\mathds{1} & \rho_1 & \rho_2 & \ldots  & \rho_N  \\
& \rho_1 & M_{12} & \ldots & M_{1N} \\
& & \rho_2 & \ldots & M_{2N} \\
& & & \ddots & \vdots \\
& & & & \rho_N
\end{pmatrix},
\end{equation}
where $q(\lambda)$ is a probability density function. Via the decomposition in Eq~\eqref{classicalmodel} we recover the states $\rho_x$ as entries in $\Gamma$.  Furthermore, the properties identified for the matrix \eqref{BMM_lambda} carry over to $\Gamma$: we have $\Gamma\succeq 0$,  $M_{ij}=\int d\lambda\ q(\lambda) M^\lambda_{ij}$ being Hermitian and positive semidefinite, and $\rho_{i} \succeq M_{ij}$ and $\rho_j\succeq M_{ij}$. The existence of such a $\Gamma$ can be decided by SDP. Given $\mathcal{E}=\{\rho_x\}_{x=1}^N$, 
\begin{equation}
\begin{aligned}\label{lvl1practical}
\text{find} \quad & \Gamma \\
\text{s.t.} \quad 
& \Gamma_{1,1}=\mathds{1},\\
& \Gamma_{i+1,i+1}=\Gamma_{1,i+1}=\rho_{i}, \quad i=1,\dots, N, \\
& \Gamma_{i,i}\succeq \Gamma_{i,j}, \quad \Gamma_{j,j}\succeq \Gamma_{i,j}\succeq 0, \quad i\neq j \text{ and } i,j\geq 2,  \\
& \Gamma \succeq 0,
\end{aligned}
\end{equation}
where $\Gamma_{i,j}$ denotes the $d\times d$ matrix at block-row $i$ and block-column $j$. The infeasibility  of this SDP implies $\mathcal{E}\notin \mathcal{C}$ and hence it certifies the coherence of $\mathcal{E}$. 

Furthermore, by examining the dual formulation of this SDP, it is possible to extract witness-type criteria for coherence. These are inequalities  that are satisfied by all incoherent $\mathcal{E}$ but violated by some coherent $\mathcal{E}$.  For concreteness, consider that we want to quantify the coherence of a set of states by applying to it isotropic noise, i.e. $\mathcal{E}_v = \lbrace \rho_{x}^{(v)} =  v\rho_x + \frac{1-v}{d} \openone\rbrace_x$, and determine bounds on $v$ above which $\mathcal{E}_v$ is coherent. In Appendix~\ref{app_dual_lev1} we show that the dual SDP takes the form
\begin{equation}\label{dual_SDP_BMM_lev1}
	\small
	\begin{aligned}
		\min_{\lbrace Z_{ij}, \gamma_{ij}, \theta_{ij} \rbrace} & \quad 1 + \tr(Z_{11}) + \sum_{x=2}^{N+1}\tr(\beta_{x}\rho_{x-1}) \\
		\text{s.t.}&\quad \begin{pmatrix}
			Z_{11} & Z_{12} & Z_{13} & \ldots  & Z_{1N+1}  \\
			& Z_{22} & Z_{23} & \ldots & Z_{2N+1} \\
			& & Z_{33} & \ldots & Z_{3N+1} \\
			& & & \ddots & \vdots \\
			& & & & Z_{N+1N+1}
		\end{pmatrix} \succeq 0,\\
		& \quad M_{j} = Z_{1j} + Z_{j1} + Z_{jj}, \quad \forall j = 2,\dots,N+1,\\
		& \quad \beta_2 = M_{2} + \sum_{i=3}^{N+1}\gamma_{2i}, \quad \beta_{N+1} = M_{N+1} + \sum_{i=2}^{N}\theta_{iN},\\
		& \quad \beta_{j} = M_{j} + \sum_{i=j+1}^{N+1}\gamma_{ji}, \quad \forall j = 3,\dots,N,\\
		& \quad 1 + \sum_{j=2}^{N+1}\tr\left[\beta_j\left(\rho_{j-1}-\frac{\openone}{d}\right)\right]= 0,\\
		& \quad \gamma_{ij} + \theta_{ij} - Z_{ij} - Z_{ji} \succeq 0, \quad \forall i\neq j \text{ and } i,j\geq 2,\\
		& \quad \gamma_{ij} \succeq 0, \quad \theta_{ij} \succeq 0,\quad \forall i\neq j \text{ and } i,j\geq 2.
	\end{aligned}
\end{equation}
Since strong duality holds (see Appendix~\ref{app_dual_lev1}), the solution to the primal and dual  problems are the same. Specifically, denote the solution to \eqref{dual_SDP_BMM_lev1} by $\bar{v}(\mathcal{E})$, with $\bar{v}(\mathcal{E})$ being the upper bound on $v^*(\mathcal{E})$ found when relaxing $\mathcal{C}$ into the SDP superset defined in \eqref{lvl1practical}. To the associated values of $\lbrace Z_{ij},\gamma_{ij},\theta_{ij} \rbrace_{i,j}$ we associate a witness inequality given by
\begin{equation}\label{coherence_witness}
	W(\mathcal{E}) = \tr(Z_{11}) + \sum_{x=2}^{N+1}\tr(\beta_{x}\rho_{x-1})\geq 0.
\end{equation}
The inequality holds for all $\mathcal{E}\in\mathcal{C}$ and therefore finding a violation implies $\mathcal{E} \notin C$. For instance, violations are achieved by the set $\mathcal{E}_{v}$ when $v > \bar{v}(\mathcal{E})$.

The SDP criterion in Eq~\eqref{lvl1practical} can also be viewed as the lowest level in an SDP hierarchy. The extension to higher levels is achieved  by  adding higher-order monomials to the list $R$. However, the hierarchy does not converge to $\mathcal{C}$ in the limit of including all monomials of arbitrary high level in $R$. For our purposes here, considering higher relaxation levels is of secondary interest, because increasing the level is associated with significantly increased computational cost. Our goal here is to introduce a general SDP criterion which is efficient and useful, and for that it needs to perform well already at the lowest level. In the next section, we  evidence that the criterion in Eq~\eqref{lvl1practical} has these merits.

\subsection{Comparison of methods and benchmarking of performance}
Practical coherence certification methods should aspire to meet three qualitative goals simultaneously. These are 
\begin{enumerate}
	\headeritem{Generality:} to apply to any set of states.
	
	\headeritem{Scalability:} to be computable for sizable $N$ and $d$.
	
	\headeritem{Accuracy:} to provide useful estimates of the set $\mathcal{C}$.
\end{enumerate}
The two methods, presented in sections~\ref{Sec:complete} and \ref{Sec:practical} respectively, both meet the first goal in the sense that they can in principle be used to analyse any set of quantum states $\mathcal{E}$. However, as expected, both of them exhibit a trade-off between scalability and accuracy. For scalability, the most relevant objective is to practically enable the evaluation of the SDP criteria for  sets $\mathcal{E}$ with at least several tens of states and/or high Hilbert space dimension. To make this possible, it is natural to focus on the lowest relaxation level for each of the two SDP methods. In this regard, the second method has a distinct advantage:  the criterion in Eq~\eqref{lvl1complete} features $N^2$ different SDP matrices, each of size $d^2$, whereas the criterion in Eq~\eqref{lvl1practical} is based on an SDP matrix of size $d(N+1)$. Thus, the first method scales quadratically in each of  $N$ and $d$, whereas the second method is only linear in either parameter. Therefore, one expects the second SDP method to be more successful in terms of the second goal above. However, this does not address the third goal. As we will show,  when comparing the lowest level of the two methods, the second one is not only significantly less computationally costly but also significantly more accurate. This provides the justification for its practicality.

We begin with showcasing the scalability of the second SDP method at the lowest level defined in Eq~\eqref{lvl1practical}. To this end, we have verified that it can easily deal with a large number of qubit states. For example, we have selected $N=100$ random qubit states and in less than eight minutes of computation on a standard desktop computer obtained a certificate of coherence. Similarly, we have selected $N=70$ random qutrit states and computed a certificate of  coherence in less than 11 minutes.  While these examples probe the many-state regime, we can also probe the large-dimension regime. For instance, consider the minimal example of just two  states: we select them as noisy eigenstates of the computational and Fourier bases, namely $\rho_1=v\ketbra{0}+\frac{1-v}{d}\mathds{1}$ and $\rho_2=v\ketbra{+}+\frac{1-v}{d}\mathds{1}$, where $\ket{+}=\frac{1}{\sqrt{d}}\sum_{i=0}^{d-1}\ket{i}$. We use the SDP in Eq~\eqref{lvl1practical} to compute a bound on $v\in[0,1]$ above which $\mathcal{E}_v=\{\rho_1,\rho_2\}$ is certified to be coherent. For example, we can compute the case of $d=150$ in 12 minutes and certify coherence for $v\gtrsim 0.9246$.

Next, we consider two case studies in which we compare the first and second SDP method.  To this end, consider first a set of states corresponding to all basis elements of both the computational basis and the Fourier basis, both subjected to isotropic noise.  Specifically, we consider $\mathcal{E}_{\text{bases}}=\{\rho_{x_1,x_2}\}$ where $x_1=1,2$ is the basis index and $x_2=0,\ldots,d-1$ is the index of the basis element. Thus, $N=2d$. We select $\rho_{1,x_2}=v\ketbra{x_2}+\frac{1-v}{d}\mathds{1}$ and $\rho_{2,x_2}=v\ketbra{e_{x_2}}+\frac{1-v}{d}\mathds{1}$, where $\ket{e_j}=\frac{1}{\sqrt{d}}\sum_{k}e^{\frac{2\pi i}{d}jk}\ket{k}$ is the Fourier basis. Using both SDP methods, we compute bounds on the isotropic noise parameter $v$ and illustrate the results in Fig~\ref{Fig:plot_benchmark}(blue). We have allowed up to $2.5$ hours of computation per point, which is why the first method truncates at $d=10$ while the second method truncates at $d=13$. We observe that the second method significantly outperforms the first in terms of accuracy: for one method the visibility threshold increases with $d$ whereas for the other it decreases.

\begin{figure}[t!]
	\centering
	\includegraphics[width=1\linewidth]{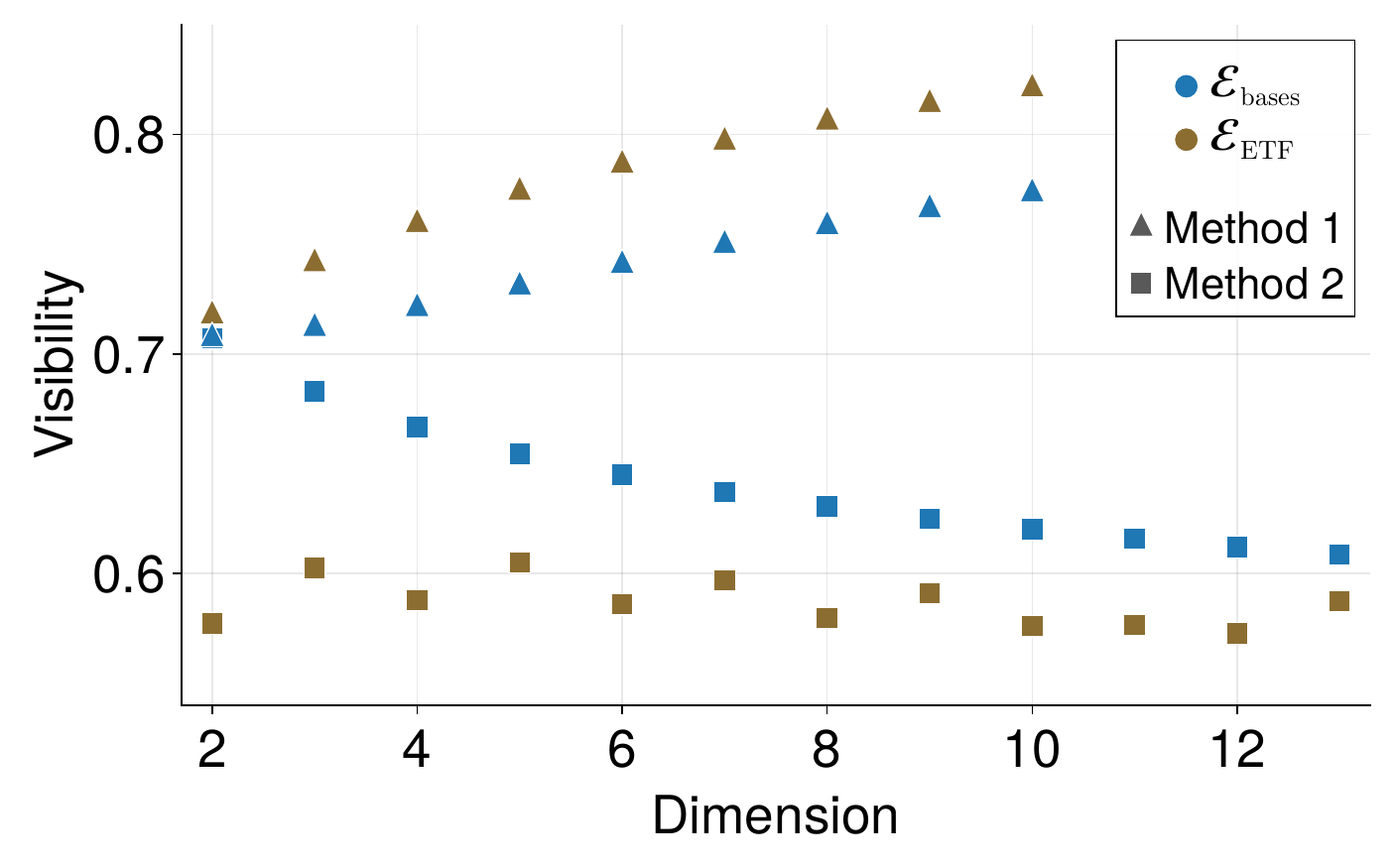}
	\caption{Bounds on the critical isotropic noise required to certify coherence for the two sets $\mathcal{E}_\text{bases}$ and $\mathcal{E}_\text{ETF}$. The points were obtained using the methods introduced in section~\ref{Sec:complete} (triangles) and in section~\ref{Sec:practical} (squares). The  results from the second method are not monotone for  $\mathcal{E}_\text{ETF}$ because  different families are used to construct the equiangular tight frame for different $d$.}
	\label{Fig:plot_benchmark}
\end{figure}

Qualitatively similar results are encountered also in the second case study. We choose a set  $\mathcal{E}_\text{ETF}$ comprised of  pure states subject to isotropic noise, where the pure states form a so-called equiangular tight frame. This means that every pair of states is separated by the same angle and that this angle is as large as it possibly can be. Specifically, one has $\lvert\braket{\psi_j}{\psi_k} \rvert^2=c$ for any $j\neq k$, where $c=\frac{N-d}{d(N-1)}$ is the largest possible constant. Such states have been used in dimension witnessing \cite{Brunner2013}, tests of non-projective measurements \cite{Rosset2019} and are closely related to the well-known symmetric informationally complete POVMs \cite{Renes2004}. In our case study, we consider $N=2d$ states and use systematic constructions of  equiangular tight frames  provided in Ref~\cite{iverson2024}; see Appendix~\ref{App:ETF} for details. The results, displayed in  Fig~\ref{Fig:plot_benchmark}, again show that at the lowest level, the second SDP method can address larger values of $d$ while giving significantly more accurate bounds.

\subsection{Qubit systems}\label{Sec:qubit}
In the analysis of coherence, qubit systems ($d=2$) is the most important special case. Naturally, the methods outlined in sections~\ref{Sec:complete} and \ref{Sec:practical} also apply for qubits. However, it is possible to design improved coherence certification methods by tailoring them specifically for qubits. To achieve this, we combine two independent results in the literature: one that allows qubit coherence to be expressed as a joint measurability problem, and another that shows how to efficiently bound jointly measurable qubit observables. 

To this end, consider that we are given a set of $N$ qubits, $\mathcal{E}=\{\rho_x\}_{x=1}^N$. To this set we associate a set of unbiased dichotomic measurements, $\mathcal{M}=\{\rho_x,\mathds{1}-\rho_x\}_{x=1}^N$. For simplicity, we can write $\rho_x=\frac{1}{2}\left(\mathds{1}+\vec{n}_x\cdot \vec{\sigma}\right)$, where $\vec{n}_x$ is the Bloch vector of $\rho_x$. The set of measurements can then be associated with antipodal pairs of Bloch vectors $\{\vec{n}_x,-\vec{n}_x\}_{x=1}^N$. It was proven in Ref~\cite{Cobucci2026} that $\mathcal{E}\in \mathcal{C}$ if and only if $\mathcal{M}$ is jointly measurable. Joint measurability means that we can write 
\begin{equation}\label{JMdef}
\rho_x=\sum_\mu p(\mu)p(0|x,\mu)G_\mu,
\end{equation} 
where $\{G_\mu\}_{\mu}$ is a POVM and $\{p(0|x,\mu),p(1|x,\mu)\}$ is a conditional probability distribution. Note that the second outcome-operator in the set $\mathcal{M}$, namely $\mathds{1}-\rho_x$, is fixed by normalisation. Whether a decomposition \eqref{JMdef} is possible can be decided by an SDP \cite{Guhne2023}, but the computation is not scalable since the number of variables  grows exponentially in $N$.

However,  efficient and accurate inner and outer approximations of the set of jointly measurable qubit observables, $\mathcal{M}$, have been proposed in Ref~\cite{Porto2025}. Combined with the above, the feasibility of an inner approximation can be interpreted as an incoherent model for $\mathcal{E}$. Analogously, the infeasibility of an outer approximation can be interpreted as a certificate of coherence for $\mathcal{E}$. To use the method, one first exploits that  $\mathcal{M}$ is jointly measurable if and only if the assemblage $\{\sigma_{a|x}\}$, obtained from applying the $x$'th measurement in $\mathcal{M}$ to half of the maximally entangled state $\phi^+$ and registering the outcome $a$, admits a local hidden state model  \cite{Quintino2014, Uola2014}. A local hidden state model is a decomposition of the form \cite{Wiseman2007}
\begin{align}\nonumber\label{LHSmodel}
& \sigma_{0|x}\equiv \tr_A\left(\rho_x\otimes \mathds{1}\phi^+\right)=\sum_\mu p(\mu) p(0|x,\mu)\sigma_\mu, \\
& \sigma_{1|x}\equiv\tr_A\left((\mathds{1}-\rho_x)\otimes \mathds{1}\phi^+\right)=\sum_\mu p(\mu) p(1|x,\mu)\sigma_\mu,
\end{align}
where $\sigma_\mu$ is a state. 

To build an inner approximation to joint measurability, we can  choose $\{\sigma_\mu\}$ as some finite set of states. On the Bloch sphere, this can be visualised as  a polyhedron whose vertices are points on the sphere's surface.  Evaluating the existence of such a restricted local hidden state model is a linear program. This can be seen by defining $\tilde{p}(a|x,\mu)=p(\mu) p(a|x,\mu)$ and using as variables in the program $\tilde{p}(a|x,\mu)$ and $p(\mu)$. The relevant constraints are  $\tilde{p}(a|x,\mu)\geq 0$ and $\sum_a \tilde{p}(a|x,\mu)=p(\mu)$ $\forall x$. In contrast, to build an outer approximation of joint measurability, we  let $\sigma_\mu$ be a set of arbitrary Hermitian matrices with the property that their Bloch vectors form the vertices of a polyhedron that contains the Bloch sphere. This can be achieved by appropriately elongating the Bloch vectors of the polyhedron used for the inner approximation. This outer approximation can, in analogy with the inner approximation, be computed as a linear program. How accurately the inner and outer approximations correspond to joint measurability is determined by how well the employed polyhedron approximates the Bloch sphere; see Ref~\cite{Porto2025} for further details.

As a compromise between accuracy of the results and the efficiency of computing them, we choose to approximate the Bloch sphere with a polyhedron corresponding to $220$ pure qubit states. It is obtained from an edgewise subdivision of the Bloch sphere (following Appendix C of Ref~\cite{liu2025}); we provide the vertices in Ref~\cite{githubRep}. We use the same polyhedron for both inner and outer approximation; the former is related to the latter by scaling the Bloch vector lengths by a factor of $r\approx 0.9934$. The fact that this number is nearly unit indicates high accuracy, i.e.~that the inner and outer approximations will nearly match. As we will demonstrate in the next section, this method allows both coherence and incoherence to be certified in systems with a large number of qubit states.

\subsection{Coherence over channels}
We now deploy the methods  discussed above  to address whether a given channel, $\Lambda$, preserves or breaks coherence. To certify that $\Lambda$ supports coherence, we need only to identify a set of pure states $\mathcal{E}=\{\psi_x\}_{x}$ such that $\{\Lambda(\psi_x)\}_{x}$ is coherent. For arbitrary channels, this problem can be approached using the proposed SDP hierarchies. The simplest approach is to guess a set $\{\psi_x\}_{x}$ and evaluate the appropriate SDP relaxation with the goal of certifying the coherence of $\{\Lambda(\psi_x)\}_{x}$. However, a more reliable approach is to choose the number of states, $N$, and iteratively search for the optimal choice of 
$\mathcal{E}=\{\psi_x\}_{x}$ for maximising the coherence of $\{\Lambda(\psi_x)\}$. For concreteness, let us quantify coherence via the isotropic noise model, i.e. defining $v^{*}(\Lambda(\mathcal{E}))$ as the largest $v$ such that $\big\{v\Lambda(\rho_x)+\frac{1-v}{d}\openone \big\}_x \in\mathcal{C}$. For each possible choice of $\mathcal{E}$, the appropriate SDP relaxation provides a bound $\bar{v}(\Lambda(\mathcal{E})) \geq v^*(\Lambda(\mathcal{E}))$ and via its dual it returns a set of operators $\lbrace \beta_x\rbrace_{x=1}^{N}$ from which a coherence witness $W(\Lambda(\mathcal{E})) = \sum_{x=1}^{N} \tr(\beta_x\Lambda(\psi_x)) + \textit{const}$ can be extracted. The search over the state space is performed using the largest violation of the witness as an oracle. The states for which this occurs can be found via the Choi-Jamio\l{}kowski representation of the channel, $\eta_{A'A} = (\Lambda \otimes \openone)[\phi^{+}]$, where $\ket{\phi^{+}} = \frac{1}{\sqrt{d}}\sum_{k=0}^{d-1}\ket{kk}$ \cite{Choi1975,Jamiolkowski1972}, since they are the ones which correspond to the smallest eigenvalues of the operators $\lbrace d\big(\tr_{A'}\big((\beta_x \otimes \openone)\eta_{A'A}\big)\big)^{T}\rbrace_{x=1}^{N}$.
These states are used as the input of the SDP hierarchy in the next iteration of the algorithm.
The whole process is repeated until desired convergence is found, returning an estimate for the set of states $\lbrace \psi_{x} \rbrace_{x=1}^{N}$ which maximises the coherence of the states emerging from the channel. Thus, it provides an upper bound on the smallest $v$ such that $\Lambda$ is coherence-breaking. The procedure is summarised in Algorithm \ref{algorithm_search}.

\begin{algorithm}
	\caption{Coherence certification}
	\label{algorithm_search}
	\KwIn{Random $\mathcal{E} = \lbrace \psi_{x} \rbrace_{x=1}^{N}$, $v_0 = 0$, $\epsilon=10^{-6}$}
	
	\textit{Step} 1 (\textbf{SDP hierarchy}): $(\bar{v}(\mathcal{E}), \lbrace \beta_x \rbrace_{x=1}^{N}, W) \leftarrow \mathcal{E}$
	
	\textit{Step} 2 (\textbf{Witness violation}): $\displaystyle \mathcal{E} =  \arg \min_{\mathcal{E}} \, W(\Lambda(\mathcal{E}))$
	
	\textit{Iteration Step} \While{$|\bar{v}-v_0| > \epsilon$}{
		
		$v_0 = \bar{v}$;
		
		$(v^{*}(\mathcal{E}), \lbrace \beta_x \rbrace_{x=1}^{N}, W) \leftarrow \mathcal{E}$;
		
		$\displaystyle \mathcal{E} =  \arg \min_{\mathcal{E}} \, W(\Lambda(\mathcal{E}))$;
	}
	
	\Return{$\bar{v},\mathcal{E}$}
	
\end{algorithm}

\noindent Our implementation is available at \cite{github-code}. As a test example, we considered the depolarisation channel in dimension $d=20$ which is known to be coherence breaking if and only if $v \leq 0.1367$ \cite{Cobucci2026}. By fixing $N=5$, the procedure returns an upper bound $\bar{v} = 0.3778$, while for $N=10$, we get $\bar{v} = 0.2983$. As expected, the bound decreases for larger values of $N$ and approaches the known limit. Additionally, we considered the multilevel amplitude-damping (MAD) channel \cite{Chessa2021}, whose Kraus operators are $K_0 = \ketbra{0}{0} + \sum_{j=1}^{d-1}\sqrt{1-\xi_j}\ketbra{j}{j}$ and $K_{ij} = \sqrt{\gamma_{ji}}\ketbra{j}{i}$, for $0 \leq i < j \leq d-1$, $\gamma_{ji} \leq 1$ and $\xi_j = \sum_{0\leq i < j}\gamma_{ji}$. Fixing the dimension to $d=5$, the number of states to $N=30$ and the values of $\gamma_{ji} = 0.5$ for all $i,j$, the procedure returns $\bar{v} = 0.4641$.

The above algorithm is motivated for systems of dimension higher than qubit. For qubit channels, we employ the tailor-made method from section~\ref{Sec:qubit}. Importantly, this method does not only certify that $\Lambda$ preserves coherence, but it can also determine when $\Lambda$ is CB. For the latter, we must show that the image of $\Lambda$, namely $\mathcal{E}_\Lambda$, as defined in Eq~\eqref{imageofchannel}, admits an incoherent model. This can be achieved by selecting $\mathcal{E}$ as a set of Hermitian matrices so that the convex hull of their Bloch vectors contains the Bloch sphere. After these matrices are sent through $\Lambda$, they yield an outer polyhedral approximation of $\mathcal{E}_\Lambda$. If the matrices obtained after the channel admit an incoherent model, it follows that $\Lambda$ is CB. As a simple example, we illustrate this for the case of the depolarisation channel, $\Lambda^{\text{depol}}_v$. Recall from the discussion in section~\ref{Sec:CBC} that it is CB if and only if $v\leq \frac{1}{2}$. We can verify this up to high accuracy with the linear programming relaxations. To this end, we have considered a polyhedron whose $N=1012$ vertices define the states of the set for which we want to test coherence. The relaxations are based on a second polyhedron with 220 vertices, used as inner approximation of the Bloch sphere, from which we obtain upper and lower bounds on the critical value of $v$. We find that the channel is CB when $v\lesssim  0.4999$ and that it is coherence-preserving when $v\gtrsim 0.5029$. This is already nearly the exact value, and the gap can be shrunk even further by considering a more expensive polyhedral approximation of the Bloch sphere.

\begin{figure}[t!]
	\centering
	\includegraphics[width=1\linewidth]{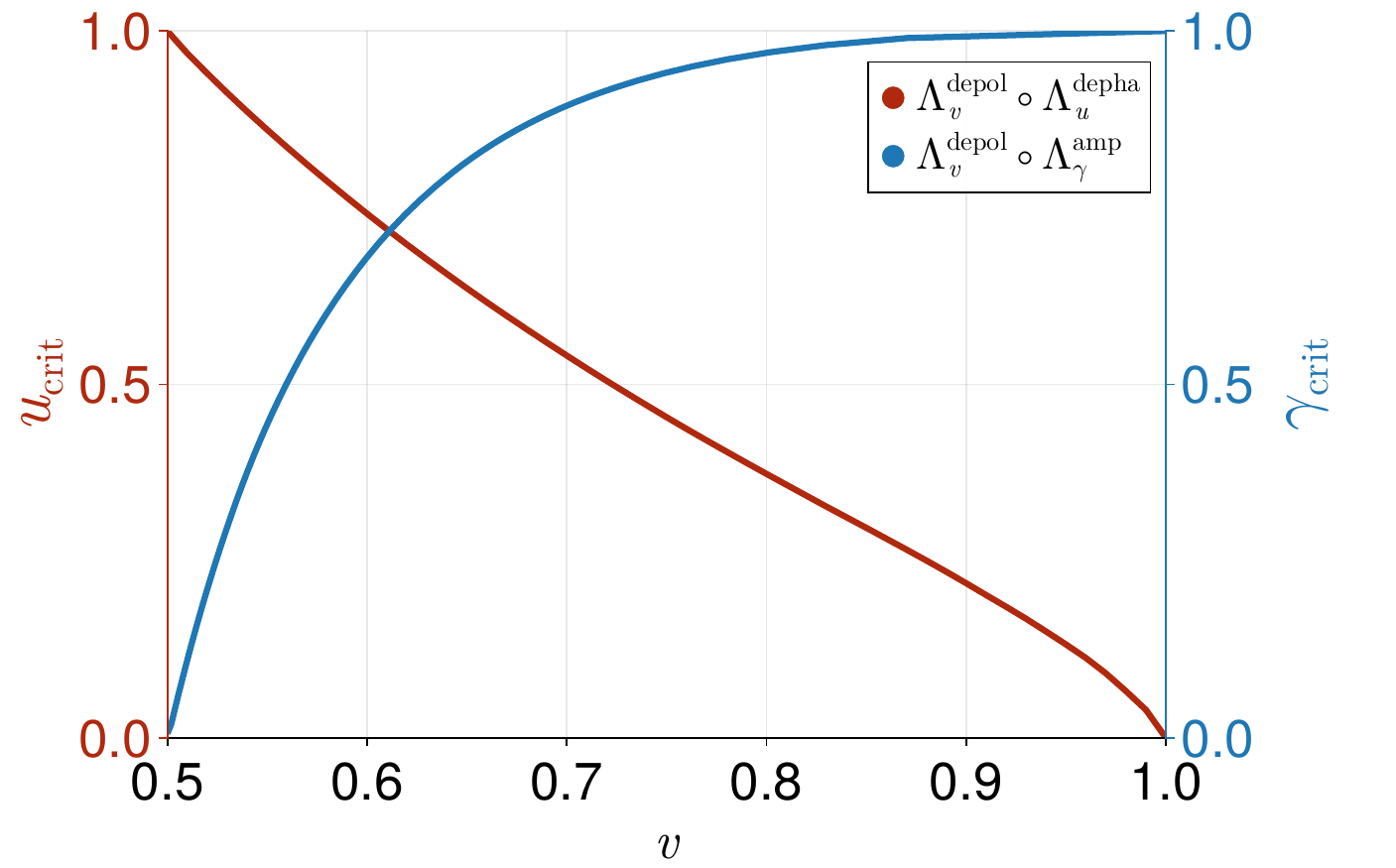}
	\centering
	\caption{Bounds on coherence properties in the channels $\Lambda_v^{\text{depol}}\circ \Lambda^{\text{depha}}_{u}$ (red) and $\Lambda_v^{\text{depol}}\circ \Lambda^{\text{amp}}_{\gamma}$ (blue). We have computed  bounds both for the channels being coherence-preserving and coherence-breaking, but only one is displayed since they are nearly identical (due to the polyhedron's shrinking factor being $r\approx 0.9879$). For any given depolarising parameter $v$, channels with dephasing parameter $u$ below the red curve are coherence breaking, whereas those with values of the dampening parameter $\gamma$ below the blue line are coherence preserving.}
	\label{Fig:qubits}
\end{figure}

We can now use this method to address the coherence properties of channels for which no analytical solution is known. We focus on two case studies. Firstly, we compose the depolarisation channel, $\Lambda_v^{\text{depol}}$, and the dephasing channel, $\Lambda^{\text{depha}}_{u}(\rho_x)=u \rho_x +(1-u)\,\mathrm{Diag}(\rho_x)$, where $\mathrm{Diag}(\rho_x)$ is $\rho_x$ when all off-diagonals are set to zero. Thus, we have $\Lambda=\Lambda_v^{\text{depol}}\circ \Lambda^{\text{depha}}_{u}$. Secondly, we compose the depolarisation channel with the amplitude-damping channel, $\Lambda^{\text{amp}}_{\gamma}(\rho)=K_0 \rho K_0^\dagger + K_1 \rho K_1^\dagger$, with Kraus operators $K_0=\ketbra{0}+\sqrt{1-\gamma}\ketbra{1}$ and $K_1=\sqrt{\gamma}\ketbra{0}{1}$. This gives $\Gamma=\Lambda_v^{\text{depol}}\circ \Lambda^{\text{amp}}_{\gamma}$. For each value of $v$, we have computed upper and lower bounds on the critical value of $u$ and $\gamma$ at which the channels are CB, testing a set of $N=220$ states. This is again based on the method of polyhedral approximations  discussed in section~\ref{Sec:qubit}. In this case, the tested states and the polyhedral approximation correspond to the same 220 vertices. In the dephasing case, the value of $v$ is fixed at each point and the optimisation is performed over $u$. Similarly, in the amplitude-damping case, we fix $\gamma$ and compute the optimisation over $v$. The critical values $u_{\text{crit}},v_{\text{crit}}$ and $\gamma_{\text{crit}}$ at which the channel is certified as CB are plotted in Fig~\ref{Fig:qubits}. These values corresponds to the inner polyhedral approximation of the Bloch sphere. As in section~\ref{Sec:qubit}, we can elongate the Bloch vectors to obtain the outer polyhedral approximation in order to also obtain bounds for which the channel preserves coherence. These bounds are without exception very close to each other and we therefore do not display them in Fig~\ref{Fig:qubits}.

Although the same method applies beyond qubits, this comes with a substantial loss in precision. Here, we investigated the critical visibility for which a coherent model exists under isotropic noise in dimension $d=3$. For this we use a 291-vertex polytope obtained by starting from the 12 vertices corresponding to MUBs and iteratively extending the polytope by adding projectors associated with the facets tangent to the inscribed hypersphere, as outlined in \cite{Porto2025}. This corresponds to a shrinking factor $r\approx 0.8177$ and certify that for any $v<0.4160$ the channel is CB, whereas for $v>0.5088$ it preserves coherence.
Generalizing this method to higher dimensions requires polytopes with a number of vertices that rapidly grows beyond tractable size.

\section{Conclusions}
Coherence is an indispensable feature of quantum theory and it has therefore been intensively studied from both foundational, resource theoretic and quantum technology perspectives. Here, we have investigated coherence when the quantum preparation device is not fully controlled, but instead subjected to classical side information. When inaccessible classical control variables can be used to influence the quantum preparation device, the certification of coherence becomes significantly less straightforward than simply checking whether a state is diagonal in the appropriate basis or checking whether a set of states commute. Many mixed but non-commuting sets of quantum states are not coherent once classical control variables are introduced. This conceptual distinction was identified in recent work \cite{Cobucci2026}, but the central question remained open: how can we certify coherence in quantum devices when allowing classical control? 

In this work, we have developed a toolbox to answer  this question. Firstly, we have introduced a converging hierarchy of semidefinite programs for bounding the set of incoherent quantum states. This is conceptually important, as it proves that coherence can be completely characterised in terms of semidefinite programming methods. Secondly, we have adopted the practical point of view, in which we are presented with a state-of-the-art quantum preparation device, capable of generating perhaps dozens of states in dozens of dimensions, and tasked to prove its coherence. For this purpose, we have developed coherence certification methods that are generally applicable, computationally efficient and yield useful results in a series of case studies. Thirdly, for qubit preparation devices, we leveraged both conceptual and methodological connections to the topic of joint measurability to enable  accurate characterisation of coherence that is scalable to well over one thousand states. All these techniques were then adapted to also analyse coherence properties of channels. As an analog to the well-known notion of entanglement-breaking channels, we have introduced coherence-breaking channels as those channels which are guaranteed to output incoherent states regardless of the states that are passed through them. We have shown how to certify both when a channel allows the coherence to survive and when it inevitably breaks it. A natural open question is to consider analytical criteria for both certifying coherence and especially for identifying coherence-breaking channels. Our results provide the tools needed for benchmarking coherence in experiments. This applies in particular to the high-dimensional regime, in which our tools provide the concrete tests needed for experiments to benchmark the noise and loss advantages associated with high-dimensional superposition.


\begin{acknowledgments}
This work is financially supported by the Swedish Foundation for Strategic Research, the Knut and Alice Wallenberg Foundation through the Wallenberg Center for Quantum Technology (WACQT), the Crafoord Foundation, the Krapperup Foundation  and  the Swedish Research Council under Contract No.~2023-03498. R.B.~acknowledges support by Deutsche Forschungsgemeinschaft (DFG,
German Research Foundation) under Germany’s Excellence
Strategy – Cluster of Excellence Matter and Light for Quantum Computing (ML4Q) EXC 2004/1 – 390534769. N.W.~acknowledges support by EIN Quantum NRW.
\end{acknowledgments}

\bibliography{bibliography}

\appendix
\onecolumngrid

\section{Complete hierarchy for coherent sets} \label{app:completehierarchy}
We give a detailed outlined of the SDP relaxation hierarchy discussed in section~\ref{Sec:complete} and prove that it converges to the incoherent set $\mathcal{C}$.

The conceptual idea behind the hierarchy
is the following. If the set $\mathcal{E}=\{\rho_{x}\}_{x=1}^{N}$
admits an incoherent model, namely  $\rho_{x}=\int d\lambda\ q(\lambda)\tau_{x,\lambda}$ where $[\tau_{x,\lambda},\tau_{x',\lambda}]=0$, 
we can use $m$ copies of the same preparation device to produce
the multipartite states 
\begin{align}
O_{x_{1}\ldots x_{m}}=\int d\lambda\ q(\lambda)\tau_{x_{1},\lambda}\otimes\ldots\otimes\tau_{x_{m},\lambda},\label{eq:O1tom}
\end{align}
where the tensor factors for different choices of the $x_i$ still commute for each $\lambda$. The goal is to formulate
the existence of the $O_{x_{1}\ldots x_{m}}$ as a semidefinite program.
To that end, we infer semidefinite constraints from the existence
of Eq~\eqref{eq:O1tom}. We get for all $x_{j}\in\{1,\ldots,N\}$:
\begin{enumerate}
\item[(C1)] $O_{x_{1}\ldots x_{m}}\succeq 0$.
\item[(C2)] $\tr_{m}(O_{x_{1}\ldots x_{m}})=O_{x_{1}\ldots x_{m-1}}$ and $O_{x_{1}}=\rho_{x_{1}}$,
where $\tr_{m}$ denotes the partial trace over the last system.
\item[(C3)] Let $S_m$ be the permutation group of $m$ elements and let $\pi\in S_{m}$. Its  unitary representation is given by
\begin{align}
V_{\pi}=\sum_{i_{1}\ldots i_{m}=1}^{N}\ketbra{\pi(i_{1}\ldots i_{m})}{i_{1}\ldots i_{m}},
\end{align}
where $\pi(i_{1}\ldots i_{m})=i_{\pi^{-1}(1)}\ldots i_{\pi^{-1}(m)}$.
Then the symmetry of Eq.~(\ref{eq:O1tom}) implies
\begin{align}
V_{\pi}O_{x_{1}\ldots x_{m}}V_{\pi}^{\dagger}=O_{\pi(x_{1}\ldots x_{m})}
\end{align}
for all $\pi$.
\item[(C4)] Consider the representation of a cyclic permutation of the first $l$ systems (called an $l$-cycle) $V_{(12\ldots l)}^{\dagger}$.
Then a straightforward calculation reveals $\tr_{12\ldots l-1}[V_{(12\ldots l)}^{\dagger}A_{1}\otimes\ldots\otimes A_{l}]=A_{1}A_{2}\ldots A_{l}$.
Now, the tensor factors of the $O_{x_{1}\ldots x_{m}}$ all commute,
implying that for all $2\leq l\leq m$ and all permutations $\pi\in S_{l}$,
\begin{align}
\tr_{1,2\ldots l-1}[V_{(12\ldots l)}^{\dagger}O_{x_{1}\ldots x_{l}x_{l+1}\ldots x_{m}}]=\tr_{1,2\ldots l-1}[V_{(12\ldots l)}^{\dagger}O_{\pi(x_{1}\ldots x_{l})x_{l+1}\ldots x_{m}}].
\end{align}
\item[(C5)] Finally, the $O_{x_{1}\ldots x_{m}}$ constitute fully separable
quantum states, implying that they have positive partial transposition
for each bipartition,  
\begin{align}
O_{x_{1}\ldots x_{m}}^{T_{X}}\succeq 0,
\end{align}
for all $X\subset\{1,\ldots,m\}$, and $T_{X}$ denote the partial
transposition w.r.t.~the systems in $X$.
\end{enumerate}
Collecting these observations, we define the $m$'th level of our SDP
hierarchy as 
\begin{align}
\text{find~} & \{O_{x_{1}\ldots x_{m}}\}_{x_{1}\ldots x_{m}=1}^{N}\label{eq:Ohierarchy}\\
\text{s.t.~} & \text{(C1) to (C5) hold.}\nonumber 
\end{align}
Note that this can easily be turned into an SDP yielding
upper bounds on $v_{\text{worst}}$ and $v^{*}$ defined in Eq.~\eqref{robustness_iso}. For instance,
for $v^{*}$, we minimise over a real parameter $v$ instead
and change (C2) to demand $O_{x_{1}}=(1-v)\rho_{x_{1}}+\frac{v}{d}\openone$.

For clarity, the hierarchy for $m=2$ reads 
\begin{align*}
\text{find~} & \{O_{xy}\}_{x,y=1}^{N}\\
\text{s.t.~} & \forall x,y=1\ldots N: &  & O_{xy}\succeq 0,\\
 &  &  & \tr_{2}O_{xy}=\rho_{x},\\
 &  &  & V_{(12)}O_{xy}V_{(12)}=O_{yx},\\
 &  &  & \tr_{1}[V_{(12)}O_{xy}]=\tr_{1}[V_{(12)}O_{yx}],\\
 &  &  & O_{xy}^{T_{1}}\succeq 0.
\end{align*}

\begin{theorem}\label{thm:completeness}
The SDP hierarchy in Eq~\eqref{eq:Ohierarchy} is complete, i.e., a
set $\mathcal{E}=\{\rho_{x}\}_{x=1}^{N}$ is incoherent if and only
if the $O_{x_{1}\ldots x_{m}}$ satisfying constraints (C1) to (C5)
exist for all $m$.
\end{theorem}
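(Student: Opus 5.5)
The ``only if'' direction is immediate: if $\mathcal{E}\in\mathcal{C}$, the operators $O_{x_1\ldots x_m}$ of Eq~\eqref{eq:O1tom} satisfy (C1)--(C5) for every $m$, as verified when the constraints were introduced. All the work lies in the converse. Suppose that for every $m$ there is a feasible family satisfying (C1)--(C5); we must build an incoherent model for $\mathcal{E}$.

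The plan is to translate the data into a measurement-like object and apply the de Finetti theorem for POVMs of \cite{pusey2013quantum}. Fix $m$ and encode the family as the block operator
\[
\Omega^{(m)}=\sum_{x_1\ldots x_m}\ketbra{x_1\ldots x_m}{x_1\ldots x_m}\otimes O_{x_1\ldots x_m},
\]
which acts on $(\mathbb{C}^N\otimes\mathbb{C}^d)^{\otimes m}$ after reordering. By (C1), (C3) and (C2) this is a permutation-symmetric, positive semidefinite operator with consistent marginals. Equivalently, the assignment $x\mapsto O_x=\rho_x$ extends, via the consistent $O_{x_1\ldots x_m}$, to a symmetric extendible family of ``classical-quantum'' operators. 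After normalising by the trace, these form an exchangeable sequence on the sites $(x_i,\text{system}_i)$.

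First, use compactness and a diagonal-subsequence argument to pass from ``feasible for every $m$'' to an infinite family $\{O_{x_1\ldots x_m}\}_{m\ge1}$ that is consistent under partial trace at all levels. The feasible sets at each level are compact, since all entries are bounded by the trace normalisation $\tr O=1$, and the projections between levels preserve feasibility. Second, apply the quantum de Finetti theorem (in the POVM or operator version of \cite{pusey2013quantum}) to this infinitely exchangeable consistent family. This yields a measure $q(\lambda)$ and operators $\tau_{x,\lambda}$ with
\[
O_{x_1\ldots x_m}=\int d\lambda\,q(\lambda)\,\tau_{x_1,\lambda}\otimes\cdots\otimes\tau_{x_m,\lambda}.
\]
Here (C5) and (C1) ensure that each $\tau_{x,\lambda}$ can be taken positive, with unit trace fixed by (C2). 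Taking $m=1$ gives Eq~\eqref{classicalmodel}.

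Third, and this is the main obstacle, we must show that (C4) forces $[\tau_{x,\lambda},\tau_{y,\lambda}]=0$ for $q$-almost every $\lambda$. Inserting the de Finetti decomposition into (C4) with $l=2$ and arbitrary remaining indices gives
\[
\int d\lambda\,q(\lambda)\,[\tau_{x,\lambda},\tau_{y,\lambda}]\otimes\tau_{x_3,\lambda}\otimes\cdots\otimes\tau_{x_m,\lambda}=0 \quad\text{for all } m.
\]
The tensor factors $\tau_{x_3,\lambda}\otimes\cdots$ generate enough functions of $\lambda$ to separate the de Finetti measure: the moments determine $q$ uniquely. I would first try to exploit uniqueness of the de Finetti measure as follows. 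The signed, operator-valued measure $q(\lambda)[\tau_{x,\lambda},\tau_{y,\lambda}]$ annihilates all polynomials in the $\tau_{\cdot,\lambda}$. By Stone--Weierstrass on the compact space of $N$-tuples of states, it annihilates all continuous functions, and hence vanishes. To make this legitimate, I would reparametrise $\lambda$ as the tuple $(\tau_{1,\lambda},\ldots,\tau_{N,\lambda})$ itself. Then polynomials in its matrix entries separate points, and the tensor products above realise all such monomials, because each matrix entry of $\tau_{x,\lambda}$ is obtained by contracting with a suitable operator. The conclusion is that the commutators vanish almost everywhere. Discarding a null set then yields a genuine incoherent model, so $\mathcal{E}\in\mathcal{C}$.
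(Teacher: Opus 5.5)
Your proof is correct, but it takes a different route from the paper's.

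\textbf{The paper's route.} The paper does not apply de Finetti to the state family directly. It embeds $\mathcal{E}$ into the $(N+1)$-outcome POVM $E_x=\rho_x/N$, $E_{N+1}=\openone-\sum_y\rho_y/N$. It then sets up a lifted POVM hierarchy and invokes the de Finetti theorem for POVMs (Theorem~5.4 of \cite{pusey2013quantum}). It must also show that any solution of (C1)--(C5) yields a solution of that POVM hierarchy. Positivity of the lifted operators carrying the index $N+1$ is exactly where (C5) enters, via the reduction criterion.

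\textbf{Your route.} Your normalised encoding $\omega^{(m)}=N^{-m}\sum_{\vec x}\ketbra{\vec x}{\vec x}\otimes O_{\vec x}$ is a permutation-symmetric, marginally consistent family of states on sites $\mathbb{C}^N\otimes\mathbb{C}^d$. So the ordinary St\o{}rmer/Hudson--Moody theorem applies directly, with no POVM version needed. The diagonal blocks of $\sigma^{\otimes m}$ factorise automatically, giving $O_{\vec x}=\int d\mu(\sigma)\bigotimes_i N\langle x_i|\sigma|x_i\rangle$. This makes the argument more elementary than the paper's. It also shows that (C5) is not needed for completeness at all. Your remark that (C5) secures positivity of $\tau_{x,\lambda}$ is therefore unnecessary, since positivity comes from the de Finetti states themselves.

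\textbf{Points to tighten.}
\begin{itemize}
\item Unit trace of $\tau_{x,\lambda}$ does not follow from (C2) alone. It follows from the second-moment identity $\int q\,(\tr\tau_{x,\lambda})^2=\tr O_{xx}=1=\int q\,\tr\tau_{x,\lambda}$, which is the analogue of the paper's trace constraint on the lifted operators.
\item Your Stone--Weierstrass step can be shortened. Choose the test polynomial to be $\overline{[\tau_x,\tau_y]_{ab}}$ itself, which is quadratic in the entries and hence already available from (C4) with $l=2$ at level $m=4$. This gives $\int q\,|[\tau_{x,\lambda},\tau_{y,\lambda}]_{ab}|^2=0$ directly, playing the same role as the paper's four-cycle identity $\int q\,[E_{x|\lambda},E_{y|\lambda}][E_{x|\lambda},E_{y|\lambda}]^\dagger=0$.
\end{itemize}

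Your explicit compactness and diagonal-subsequence step, which produces a single consistent infinite family from level-wise feasibility, is a point the paper leaves implicit.
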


\begin{proof}
We first build an $(N+1)$-effect POVM $\tilde{\mathcal{E}}=\{E_{x}\}_{x=1}^{N+1}$ based on the states in $\mathcal{E}$;
\begin{equation}
E_{x}=\begin{cases}
\rho_{x}/N & x\in\{1,\ldots,N\},\\
\openone-\sum_{y=1}^{N}\rho_{y}/N & x=N+1.
\end{cases}\label{eq:POVMeffects}
\end{equation}
Note that the renormalisation is necessary to ensure positivity of
the last effect. Now, if $\mathcal{E}$ is incoherent,  we can decompose $\tilde{\mathcal{E}}$ in terms of POVMs with
commuting effects via
\begin{equation}
E_{x}=\int d\lambda\ q(\lambda)E_{x|\lambda}\label{eq:POVMdecomposition}
\end{equation}
where 
\begin{align}
E_{x|\lambda}=\begin{cases}
\tau_{x,\lambda}/N & x\in\{1,\ldots,N\},\\
\openone-\sum_{y=1}^{N}\tau_{x,\lambda}/N & x=N+1,
\end{cases}
\end{align}
with $[E_{x|\lambda},E_{y|\lambda}]=0$, $\sum_{x=1}^{N+1}E_{x|\lambda}=\openone$.
We will now show the following:
\begin{enumerate}
\item There exists a complete hierarchy to decide the existence of a decomposition
(\ref{eq:POVMdecomposition}) for a given POVM. We denote the corresponding
solution of level $m$ by $\{\tilde{O}_{x_{1}\ldots x_{m}}\}_{x_{1}\ldots x_{m}=1}^{N+1}$.
\item From the solution of the POVM hierarchy, we can extract a decomposition
of $\mathcal{E}$.
\item From a solution $\{O_{x_{1}\ldots x_{m}}\}_{x_{1}\ldots x_{m}=1}^{N}$
of the $m$'th level of the hierarchy in Eq~\eqref{eq:Ohierarchy},
we can construct a solution $\{\tilde{O}_{x_{1}\ldots x_{m}}\}_{x_{1}\ldots x_{m}=1}^{N+1}$
for the $m$'th level of the POVM hierarchy.
\end{enumerate}
Together, the three results imply the claim, as existence of the $O_{x_{1}\ldots x_{m}}$
for all $m$ imply via 3.~the existence of corresponding $\tilde{O}_{x_{1}\ldots x_{m}}$,
which implies via 1.~the existence of a decomposition of the corresponding
POVM, from which a decomposition can be read off via 2.

\bigskip

\emph{Proof of 1.} For a given POVM $\tilde{\mathcal{E}}=\{E_{x}\}_{x=1}^{N+1}$,
consider the hierarchy
\begin{align}
\text{find} & \{\tilde{O}_{x_{1}\ldots x_{m}}\}_{x_{1}\ldots x_{m}=1}^{N+1}\label{eq:Otildehierarchy}\\
\text{s.t.} & \forall x_{1}\ldots x_{m}=1\ldots N+1: &  & \tilde{O}_{x_{1}\ldots x_{m}}\succeq 0,\label{eq:Ot_pos}\\
 &  &  & \sum_{y=1}^{N+1}\tilde{O}_{x_{1}\ldots x_{m-1}y}=\tilde{O}_{x_{1}\ldots x_{m-1}}\otimes\openone,\label{eq:Ot_sum}\\
 &  &  & \tilde{O}_{x_{1}}=E_{x_{1}},\label{eq:Ot_E}\\
 &  &  & \forall\pi\in S_{m}:\,V_{\pi}\tilde{O}_{x_{1}\ldots x_{m}}V_{\pi}^{\dagger}=\tilde{O}_{\pi(x_{1}\ldots x_{m})},\label{eq:Ot_sym}\\
 &  &  & \forall l\in\{2,\ldots m\},\pi\in S_{l}:\,\nonumber \\
 &  &  & \quad\tr_{1,2\ldots l-1}[V_{(1,2\ldots l)}^{\dagger}\tilde{O}_{x_{1}\ldots x_{l}x_{l+1}\ldots x_{m}}]\nonumber \\
 &  &  & \hfill=\tr_{1,2\ldots l-1}[V_{(1,2\ldots l)}^{\dagger}\tilde{O}_{\pi(x_{1}\ldots x_{l})x_{l+1}\ldots x_{m}}],\label{eq:Ot_swap}\\
 &  &  & \tr(\tilde{O}_{x_{1}\ldots x_{m}})=\prod_{j=1}^{m}\tr(E_{x_{j}}).\label{eq:Ot_tr}
\end{align}
If this hierarchy is feasible for every level $m$, then constraints~(\ref{eq:Ot_pos})-(\ref{eq:Ot_sym})
imply via Theorem~5.4 in Ref~\cite{pusey2013quantum} that there exist
POVMs $\{E_{x|\lambda}\}_{x=1}^{N+1}$ and a probability distribution
$q(\lambda)$ such that for all $m$ and $x_{1}\ldots x_{m}\in\{1,\ldots,N+1\}$:
\begin{equation}
\tilde{O}_{x_{1}\ldots x_{m}}=\int d\lambda\ q(\lambda)E_{x_{1}|\lambda}\otimes\ldots\otimes E_{x_{m}|\lambda}.\label{eq:Ot_solution}
\end{equation}
Next, we show that almost all $\{E_{x|\lambda}\}_{x=1}^{N+1}$ have
commuting effects. To that end, consider level $m=4$, fix $x_{1}=x_{2}=x$,
$x_{3}=x_{4}=y$ for some $x,y\in\{1,\ldots,N+1\}$. Using constraint~(\ref{eq:Ot_swap}),
we get
\begin{align*}
0 & =\tr_{1,2,3}[V_{(1234)}^{\dagger}(\tilde{O}_{xyyx}-\tilde{O}_{xyxy}-\tilde{O}_{yxyx}+\tilde{O}_{yxxy})]\\
 & =\int d\lambda\ q(\lambda)(E_{x|\lambda}E_{y|\lambda}E_{y|\lambda}E_{x|\lambda}-E_{x|\lambda}E_{y|\lambda}E_{x|\lambda}E_{y|\lambda}\\
 & \phantom{=\int d\lambda\ q(\lambda)(}-E_{y|\lambda}E_{x|\lambda}E_{y|\lambda}E_{x|\lambda}+E_{y|\lambda}E_{x|\lambda}E_{x|\lambda}E_{y|\lambda}\\
 & =\int d\lambda\ q(\lambda)[E_{x|\lambda},E_{y|\lambda}][E_{x|\lambda},E_{y|\lambda}]^{\dagger}.
\end{align*}
As $[E_{x|\lambda},E_{y|\lambda}][E_{x|\lambda},E_{y|\lambda}]^{\dagger}$
is positive semidefinite, this implies that $[E_{x|\lambda},E_{y|\lambda}]=0$
for almost all $\lambda$, i.e., the effects commute.

Finally, we impose constraint~(\ref{eq:Ot_tr}) for fixed $m$ and
$x=x_{1}=x_{2}=\ldots=x_{m}$. Inserting Eq.~(\ref{eq:Ot_solution})
yields 
\begin{align}
\int d\lambda\ q(\lambda)\tr(E_{x|\lambda})^{m}=\tr(E_{x})^{m}
\end{align}
for all $x$ and $m$, implying that $\tr(E_{x|\lambda})=\tr(E_{x})$
for almost all $\lambda$. In conclusion, we have shown that if hierarchy~(\ref{eq:Otildehierarchy})
exists for a given POVM $\tilde{\mathcal{E}}=\{E_{x}\}_{x=1}^{N+1}$
for all $m$, then there exists a decomposition of $\tilde{\mathcal{E}}$
in terms of POVMs $\{E_{x|\lambda}\}_{x=1}^{N+1}$ with commuting
effects and $\tr(E_{x|\lambda})=\tr(E_{x})$. 

\bigskip

\emph{Proof of 2.} For $\mathcal{E}=\{\rho_{x}\}_{x=1}^{N}$, consider
the $(N+1)$-effect POVM $\tilde{\mathcal{E}}$ with effects given
in Eq.~(\ref{eq:POVMeffects}) and assume that hierarchy~(\ref{eq:Otildehierarchy})
is feasible for each $m$. By 1., it follows that we can write for
all $x\in\{1,\ldots N\}$
\begin{align*}
\rho_{x}/N & =\int d\lambda\ q(\lambda)E_{x|\lambda}
\end{align*}
with $[E_{x|\lambda},E_{y|\lambda}]=0$ and $\tr(E_{x|\lambda})=1/N$.
Multiplying both sides by $N$ yields a decomposition of $\mathcal{E}$
in terms of commuting sets of states.

\bigskip

\emph{Proof of 3.} To complete the proof, we have to show that feasibility
of~(\ref{eq:Ohierarchy}) implies feasibility of~(\ref{eq:Otildehierarchy})
for the POVM with effects~(\ref{eq:POVMeffects}). To that end, consider
the solution $\{O_{x_{1}\ldots x_{m}}\}_{x_{1}\ldots x_{m}=1}^{N}$
of the $m$'th level of~(\ref{eq:Ohierarchy}), and we set
\begin{align}
\tilde{O}_{x_{1}\ldots x_{m}}=\frac{1}{N^{m}}O_{x_{1}\ldots x_{m}}
\end{align}
for all $x_{1}\ldots x_{m}\in\{1,\ldots,N\}.$ These automatically
fulfill constraints (\ref{eq:Ot_pos}), (\ref{eq:Ot_E}), (\ref{eq:Ot_sym}),
(\ref{eq:Ot_swap}) and (\ref{eq:Ot_tr}). It remains to construct
the $\tilde{O}_{x_{1}\ldots x_{m-1},N+1}$, $\tilde{O}_{x_{1}\ldots x_{m-2},N+1,N+1}$
and so on (and the remaining ones follows from symmetry). First, we set 
\begin{align*}
\tilde{O}_{x_{1}\ldots x_{m-1},N+1} & =\tilde{O}_{x_{1}\ldots x_{m-1}}\otimes\openone-\sum_{y=1}^{N}\tilde{O}_{x_{1}\ldots x_{m-1},y}\\
 & =\frac{1}{N^{m-1}}O_{x_{1}\ldots x_{m-1}}\otimes\openone-\frac{1}{N^{m}}\sum_{y=1}^{N}O_{x_{1}\ldots x_{m-1},y}
\end{align*}
such that the sum constraint~(\ref{eq:Ot_sum}) is fulfilled. The
remaining constraints are fulfilled due to their linearity. However,
positivity of $\tilde{O}_{x_{1}\ldots x_{m-1},N+1}$ has to be ensured.
To that end, we write
\begin{align*}
N^{m}\tilde{O}_{x_{1}\ldots x_{m-1},N+1} & =NO_{x_{1}\ldots x_{m-1}}\otimes\openone-\sum_{y=1}^{N}O_{x_{1}\ldots x_{m-1},y}\\
 & =\sum_{y=1}^{N}(O_{x_{1}\ldots x_{m-1}}\otimes\openone-O_{x_{1}\ldots x_{m-1},y})\\
 & =\sum_{y=1}^{N}[\openone^{(1\ldots m-1)}\otimes R^{(m)}](O_{x_{1}\ldots x_{m-1},y}),
\end{align*}
where $\openone^{(1\ldots m-1)}\otimes R^{(m)}$ denotes the partial
reduction map applied to system $m$ with $R(\rho)=\tr(\rho)\openone-\rho$ \cite{horodecki1999reduction}. The reduction map is known to be
a positive, but not completely positive map which is strictly weaker
than PPT, meaning that if partial transposition applied to system
$m$ yields a positive semidefinite matrix (as ensured by constraint
(C5) of hierarchy~(\ref{eq:Ohierarchy})), applying the partial reduction
map on system $m$ must yield a positive semidefinite matrix as well,
ensuring that $\tilde{O}_{x_{1}\ldots x_{m-1},N+1}\succeq 0$, as well.

Finally, we have to consider analogously the cases with multiple indices
set to $N+1$. For instance, we set (with the systems the operators
act upon written in superscript)
\begin{align*}
\tilde{O}_{x_{1}\ldots x_{m-2},N+1,N+1} & =\tilde{O}_{x_{1}\ldots x_{m-2}}^{(1,\ldots,m-2)}\otimes\openone^{(m-1)}\otimes\openone^{(m)}-\sum_{y=1}^{N}\tilde{O}_{x_{1}\ldots x_{m-2},y}^{(1,\ldots,m-1)}\otimes\openone^{(m)}\\
 & \phantom{=}-\sum_{z=1}^{N}\tilde{O}_{x_{1}\ldots x_{m-2},z}^{(1,\ldots,m-2,m)}\otimes\openone^{(m-1)}+\sum_{y,z=1}^{N}\tilde{O}_{x_{1}\ldots x_{m-2},y,z}^{(1,\ldots,m)}\\
 & \propto\sum_{y,z=1}^{N}[\openone^{(1\ldots m-2)}\otimes R^{(m-1)}\otimes R^{(m)}](O_{x_{1}\ldots x_{m-2},y,z}),
\end{align*}
which is positive since $O_{x_{1}\ldots x_{m-2},y,z}^{T_{\{m-1,m\}}}\succeq 0$.
The remaining cases follow analogously, establishing 3.~and thereby
the claim.
\end{proof}

\section{Dual of SDP in section \ref{Sec:practical} at lowest level} \label{app_dual_lev1}

In this appendix, we provide a derivation of the dual formulation of the block matrix SDP relaxation to certify coherence. In particular, we focus on the relaxation at level $K=1$ presented in \eqref{lvl1practical}. For the concrete case of the isotropic noise, it reads as
\begin{equation}\label{primal_block_lev1}
	\begin{aligned}
		v^{*} = \max_{\lbrace v, \Gamma_{i,j} \rbrace} & \quad v \\
		\text{s.t.}&\quad \sum_{i=1}^{N+1} \ketbra{i}{i} \otimes \Gamma_{i,i} +  \sum_{i=1}^{N} \sum_{j=i+1}^{N+1}(\ketbra{i}{j} \otimes \Gamma_{i,j} + \ketbra{j}{i} \otimes \Gamma_{i,j})\succeq 0,\\
		& \quad \Gamma_{1,j} = v\rho_{j-1} + \frac{1-v}{d}\openone, \quad \forall j = 2,\dots, N+1,\\
		& \quad \Gamma_{j,j} = \Gamma_{1,j}, \quad \forall j = 2,\dots, N+1,\\
		& \quad \Gamma_{i,j} \succeq 0, \quad \forall i = 2,\dots,N; j = i+1,\dots,N+1,\\
		& \quad \Gamma_{i,i} \succeq \Gamma_{i,j}, \quad \forall i = 2,\dots,N; j = i+1,\dots,N+1,\\
		& \quad \Gamma_{j,j} \succeq \Gamma_{i,j}, \quad \forall i = 2,\dots,N; j = i+1,\dots,N+1,\\
		& \quad \Gamma_{1,1} = \openone.
	\end{aligned}
\end{equation}
By introducing a Lagrangian multiplier for each constraint
\begin{equation}
	\begin{aligned}
		&\sum_{i=1}^{N+1} \ketbra{i}{i} \otimes \Gamma_{i,i} +  \sum_{i=1}^{N} \sum_{j=i+1}^{N+1}(\ketbra{i}{j} \otimes \Gamma_{i,j} + \ketbra{j}{i} \otimes \Gamma_{i,j})\succeq 0 \quad \rightarrow \quad Z \succeq 0,\\
		&\Gamma_{1,j} = v\rho_{j-1} + \frac{1-v}{d}\openone \quad \rightarrow \quad Y_{j}, \quad \forall j = 2,\dots,N+1,\\
		&\Gamma_{j,j} = \Gamma_{1,j} \quad \rightarrow \quad B_{j}, \quad \forall j = 2,\dots,N+1,\\
		&\Gamma_{i,j} \succeq 0 \quad \rightarrow \quad W_{ij} \succeq 0, \quad \forall i = 2,\dots,N; j = i+1,\dots,N+1,\\
		&\Gamma_{i,i} \succeq \Gamma_{i,j} \quad \rightarrow \quad \gamma_{ij} \succeq 0, \quad \forall i = 2,\dots,N; j = i+1,\dots,N+1,\\
		&\Gamma_{j,j} \succeq \Gamma_{i,j} \quad \rightarrow \quad \theta_{ij} \succeq 0, \quad \forall i = 2,\dots,N; j = i+1,\dots,N+1,\\
		&\Gamma_{1,1} = \openone \quad \rightarrow \quad R,
	\end{aligned}
\end{equation}
we can write the Lagrangian of the problem as
\begin{equation}\label{lagrangian_primal}
	\begin{aligned}
		\mathcal{L} = & v + \sum_{i=1}^{N+1}\tr[Z \ketbra{i}{i} \otimes \Gamma_{i,i}] + \sum_{i=1}^{N} \sum_{j=i+1}^{N+1}\tr\left[Z(\ketbra{i}{j} \otimes \Gamma_{i,j} + \ketbra{j}{i} \otimes \Gamma_{i,j})\right] + \sum_{j=2}^{N+1}\tr\left[Y_{j}\left(\Gamma_{1,j} - v\rho_{j-1} -\frac{1-v}{d}\openone\right)\right]\\
		& \, + \sum_{j=2}^{N+1}\Tr\left[B_{j}(\Gamma_{j,j} - \Gamma_{1,j})\right] + \sum_{i=2}^{N}\sum_{j=i+1}^{N+1}\tr[W_{ij} \Gamma_{i,j}] + \sum_{i=2}^{N}\sum_{j=i+1}^{N+1}\tr\left[\gamma_{ij}(\Gamma_{i,i} - \Gamma_{i,j})\right]\\
		& \, + \sum_{i=2}^{N}\sum_{j=i+1}^{N+1}\tr\left[\theta_{ij}(\Gamma_{j,j} - \Gamma_{i,j})\right] + \tr\left[R(\Gamma_{1,1} - \openone)\right].
	\end{aligned}
\end{equation}
The Lagrangian multipliers will become the variables of the dual problem. In order to ensure that the Lagrangian is never smaller than the optimal value of the primal problem, i.e. $\mathcal{L} \geq v^{*}$, we must restrict to $Z \succeq 0$, $W_{ij} \succeq 0$, $\gamma_{ij} \geq 0$ and $\theta_{ij} \geq 0$. Indeed, in this case the terms involving $Z$, $W_{ij}$, $\gamma_{ij}$ and $\theta_{ij}$ will always be non-negative for any feasible solution of the primal. Using that $\tr\left[Z(\ketbra{i}{j} \otimes G_{ij})\right] = \tr\left[G_{ij} \Tr_{A}(Z(\ketbra{i}{j} \otimes \openone)\right] = \tr\left[G_{ij}Z_{ji}\right]$, we can factorize the primal variables in \eqref{lagrangian_primal} as
\begin{equation}
	\begin{aligned}
		\mathcal{L} &= v\left[1 - \sum_{j=2}^{N+1}\tr\left[Y_{j}\left(\rho_{j-1} - \frac{\openone}{d}\right)\right]\right] + \tr\left[\Gamma_{1,1}(Z_{11} + R)\right] + \tr\left[\Gamma_{2,2}\left(Z_{22} + B_{2} + \sum_{j=3}^{N+1}\gamma_{2j}\right)\right]\\
		&\, + \sum_{j=3}^{N}\tr\left[\Gamma_{j,j}\left(Z_{jj} + B_{j} + \sum_{i=j+1}^{N+1}\gamma_{ji} + \sum_{i=2}^{j-1}\theta_{ij}\right)\right] +  \tr[\Gamma_{N+1,N+1}\left(Z_{N+1N+1} + B_{N+1} + \sum_{i=2}^{N}\theta_{iN+1}\right)]\\
		& \, + \sum_{j=2}^{N+1}\Tr\left[\Gamma_{1,j}(Z_{1j} + Z_{j1} + Y_{j} - B_{j})\right] + \sum_{i=2}^{N}\sum_{j=i+1}^{N+1}\tr\left[\Gamma_{i,j}(Z_{ij} + Z_{ji} + W_{ij} - \gamma_{ij} - \theta_{ij})\right]\\
		&\, - \tr[R] - \sum_{j=2}^{N+1}\tr\left[Y_{j}\frac{\openone}{d}\right].
	\end{aligned}
\end{equation}
By imposing the following contraints on the dual variables,
\begin{equation}
	\begin{aligned}
		& 1 - \sum_{j=2}^{N+1}\tr\left[Y_{j}\left(\rho_{j-1} - \frac{\openone}{d}\right)\right] = 0\\
		& Z_{11} + R = 0, \quad Z_{22} + B_{2} + \sum_{j=3}^{N+1}\gamma_{2j} = 0, \quad Z_{jj} + B_{j} + \sum_{i=j+1}^{N+1} \gamma_{ji} +  \sum_{i=2}^{j-1}\theta_{ij} = 0, \quad \forall j = 3,\dots,N,\\
		& Z_{N+1N+1} + B_{N+1} + \sum_{i=2}^{N}\theta_{iN+1} = 0,\\
		& Z_{1j} + Z_{j1} + Y_{j} - B_{j} = 0, \quad \forall j = 2,\dots,N+1,\\
		& Z_{ij} + Z_{ji} + W_{ij} - \gamma_{ij} - \theta_{ij} = 0, \quad \forall i = 2,\dots,N; j = i+1,\dots,N+1,
	\end{aligned}
\end{equation}
the Lagrangian does not present any dependence on the dual variables. Therefore, the dual problem reads as
\begin{equation}
	\begin{aligned}
		\min_{\lbrace Z_{ij},W_{ij},Y_{j},B_{j},\gamma_{ij},\theta_{ij} \rbrace} & \quad  - \tr[R] - \sum_{j=2}^{N+1}\tr\left[Y_{j}\frac{\openone}{d}\right]\\
		\text{s.t.}&\quad \sum_{i=1}^{N+1} \ketbra{i}{i} \otimes Z_{ii} +  \sum_{i=1}^{N} \sum_{j=i+1}^{N+1}(\ketbra{i}{j} \otimes Z_{ij} + \ketbra{j}{i} \otimes Z_{ji})\succeq 0,\\
		&\quad  1 - \sum_{j=2}^{N+1}\tr\left[Y_{j}\left(\rho_{j-1} - \frac{\openone}{d}\right)\right] = 0,\\
		& \quad Z_{11} + R = 0, \quad Z_{22} + B_{2} + \sum_{j=3}^{N+1}\gamma_{2j} = 0,\\
		& \quad Z_{jj} + B_{j} + \sum_{i=j+1}^{N+1} \gamma_{ji} +  \sum_{i=2}^{j-1}\theta_{ij} = 0, \quad \forall j = 3,\dots,N,\\
		& \quad Z_{N+1N+1} + B_{N+1} + \sum_{i=2}^{N}\theta_{iN+1} = 0,\\
		& \quad Z_{1j} + Z_{j1} + Y_{j} - B_{j} = 0, \quad \forall j = 2,\dots,N+1,\\
		& \quad Z_{ij} + Z_{ji} + W_{ij} - \gamma_{ij} - \theta_{ij} = 0, \quad \forall i = 2,\dots,N; j = i+1,\dots,N+1,\\
		& \quad W_{ij} \succeq 0, \quad \gamma_{ij} \succeq 0, \quad \theta_{ij} \succeq 0, \quad \forall i = 2,\dots,N; j = i+1,\dots,N+1.
	\end{aligned}
\end{equation}
Note that $R$, $B_{j}$, $Y_{j}$ and $W_{ij}$ are slack variables, since they do not appear in the objective function. Therefore, we can solve for them using the constraints. The simplified dual reads as
\begin{equation}\label{dual_block_lev1}
	\begin{aligned}
		\min_{\lbrace Z_{ij}, \gamma_{ij}, \theta_{ij} \rbrace} & \quad   1 + \tr[Z_{11}] + \sum_{j=2}^{N+1}\tr[\beta_{j}\rho_{j-1}],\\
		\text{s.t.}&\quad \sum_{i=1}^{N+1} \ketbra{i}{i} \otimes Z_{ii} +  \sum_{i=1}^{N+1} \sum_{j=i+1}^{N+1}(\ketbra{i}{j} \otimes Z_{ij} + \ketbra{j}{i} \otimes Z_{ji})\succeq 0,\\
		& \quad M_{j} = Z_{1j} + Z_{j1} + Z_{jj}, \quad \forall j = 2,\dots,N+1,\\
		& \quad \beta_{2} = M_{2} + \sum_{j=3}^{N+1}\gamma_{2j}, \quad \beta_{N+1} = M_{N+1} + \sum_{i=2}^{N}\theta_{iN+1}, \quad \beta_{j} = M_{j} + \sum_{i=j+1}^{N+1}\gamma_{ji} + \sum_{i=2}^{j-1}\theta_{ij}, \quad \forall j = 2,\dots,N,\\
		& \quad 1 + \sum_{j=2}^{N+1}\tr\left[\beta_j\left(\rho_{j-1}-\frac{\openone}{d}\right)\right]= 0,\\
		&\quad  Z_{ij} + Z_{ji} - \gamma_{ij} - \theta_{ij} \preceq 0, \quad \gamma_{ij} \succeq 0, \quad \theta_{ij} \succeq 0, \quad \forall i = 2,\dots,N; j = i+1,\dots,N+1.
	\end{aligned}
\end{equation}
In order to prove that strong duality holds for this problem, i.e. the solution of the primal and the dual problems coincide, we need to check that Slater's conditions are fulfilled for \eqref{primal_block_lev1} \cite{Skrzypczyk2023}. In particular, we need that (i) it is possible to find a strictly feasible solution to the primal, (ii) the optimal objective is bounded. The second condition is always satisfied from the normalization of the states. To prove (i), we can consider $\Gamma_{i,j} = \frac{\mathds{1}}{d}$ for all $i,j$, which always gives a solution to the primal in terms of strictly positive variables. Now, we can show that the feasible solutions $\lbrace Z_{ij}, \gamma_{ij} \rbrace_{i,j}$ of the dual problem \eqref{dual_block_lev1} provide a witness for coherence detection. Indeed, given a set $\mathcal{E}_v = \lbrace \rho_j^{(v)} = v\rho_j + \frac{1-v}{d}\mathds{1}\rbrace_j$, we can define
\begin{equation}
	W(\mathcal{E}_{v}) =  \tr[Z_{11}] + \sum_{j=2}^{N+1}\tr[\beta_{j}\rho_{j-1}^{(v)}].
\end{equation}
From strong duality, we know that $v^{*} = 1 + \tr[Z_{11}] + \sum_{j=2}^{N+1}\tr[\beta_{j}\rho_{j-1}]$ for the optimal values of $\lbrace Z_{ij}, \gamma_{ij} \rbrace_{ij}$. Therefore, for any set $\mathcal{E}_v$ with $v \leq v^{*}$, we get
\begin{equation}
	\begin{aligned}
		W(\mathcal{E}_v) &= \tr[Z_{11}] + v\underbrace{\sum_{j=2}^{N+1}\tr[\beta_{j}\rho_{j-1}]}_{=v^{*}-1-\tr[Z_{11}]} + (1-v)\underbrace{\sum_{j=2}^{N+1}\tr\left[\beta_j\frac{\openone}{d}\right]}_{=v^{*}-\tr[Z_{11}]}\\
		& = \tr[Z_{11}] + v(v^{*}-1-\tr[Z_{11}]) + (1-v)(v^{*}-\tr[Z_{11}]) = v^{*}-v \geq 0.
	\end{aligned}
\end{equation}
This inequality is satisfied by all $\mathcal{E}_{v} \in \mathcal{C}$. Therefore, finding $W(\mathcal{E}) \ngeq 0$ implies that $\mathcal{E} \notin \mathcal{C}$.
	
\section{Equiangular tight frame construction}\label{App:ETF}

In this appendix, we show how to explicitly construct families of $2d$ equiangular tight frames in dimension $d$ for all dimensions $d=2,3,\dots, 13$, following the techniques developed in \cite{ETF2d2024}. 

\subsection{Definitions and notation}

Let $\{\psi_i\}_{i=1}^n$ be pure states in a $d$-dimensional Hilbert space and consider $\Psi=[\psi_1, \dots,\psi_n]$ to be the matrix whose $i$-th column is $\psi_i$\\ \\
\textbf{Definition 1.} The set $\{\psi_i\}_{i=1}^n \subset \mathbb C^d$ is called a $d\times n$ \emph{equiangular tight frame} (ETF) if it is both tight and equiangular. Tightness means that $\Psi\Psi^\ast = \frac{n}{d} \mathds{1}_d$, while equiangularity means that there exists a constant $\alpha>0$ such that $\left|\left<\psi_i,\psi_j\right>\right|=\alpha$ for all $i\neq j$.\\ \\
\textbf{Definition 2.} The \emph{Gram matrix} of the frame $\{\psi_i\}_{i=1}^n$ is the matrix $G=\Psi^\dagger \Psi$ whose $(i,j)$-th entry is the inner product $\langle\psi_i,\psi_j\rangle$. For an ETF, the Gramm matrix can be written as $G = I_n + \alpha S$. Here $S$ is a Hermitian matrix with zero diagonal and off-diagonal entries with unit modulus. The matrix $S$ is called \emph{signature matrix} of the ETF.\\ \\
\textbf{Definition 3.} A \emph{conference matrix} of order $n$ is an $n\times n$ matrix $C$ with diagonal entries equal to $0$, off-diagonal entries equal to $\pm1$, and satisfying $C^TC=(n-1)\mathds{1}_n$. A conference matrix is called \emph{symmetric} if $C^T=C$, and \emph{skew-symmetric} if $C^T=-C$. For convenience of notation, we call the first $C_\text{sym}$ and the second $C_\text{skew}$.\\ \\
\textbf{Definition 4.} Let $A$ be the adjacency matrix of a graph on $n$ vertices, and let $J_n$ denote the $n\times n$ all-ones matrix. The matrix $Q=J_n-\mathds{1}_n-2A$ is called the \emph{Seidel adjacency matrix} of the graph. Similarly, if $A$ is the adjacency matrix of an orientation of a complete graph (digraph), the matrix $T=A-A^\top$ is called its \emph{skew-Seidel adjacency matrix}. \\ \\
\textbf{Definition 5.} Let $\mathbb F_q$ be the finite field with $q$ elements, where $q$ is an odd prime power. If $q\equiv 1 \pmod 4$, the \emph{Paley graph} on $\mathbb F_q$ is the graph whose vertices are the elements of $\mathbb F_q$, where two distinct vertices $x,y\in\mathbb F_q$ are adjacent if and only if $x-y$ is a nonzero square in $\mathbb F_q$. If $q\equiv 3 \pmod 4$, the same condition defines the \emph{Paley digraph}, where there is an oriented edge from $x$ to $y$ if and only if $x-y$ is a nonzero square in $\mathbb F_q$. 

\subsection{Paley equiangular tight frames}

The first family of ETFs is constructed from conference matrices. It follows from \textbf{Definition 3} that any symmetric conference matrix has eigenvalues in $\pm \sqrt{n-1}$. Since $\tr(C_\text{sym})=0$, each eigenvalue has multiplicity $\frac{n}{2}$. Hence, $\mathds{1}_n+\frac{1}{\sqrt{n-1}}C_\text{sym}$ is the Gramm matrix of a real $\frac{n}{2}\times n$ ETF. By a similar argument $\mathds{1}+\frac{i}{\sqrt{n-1}}C_\text{skew}$ is the Gram matrix of a complex $\frac{n}{2}\times n$ ETF for a skew-Symmetric conference matrix. Furthermore, a conference graph, and similarly a digraph, can be obtained directly from \textbf{Definition 4} by extending appropriately the corresponding Seidel or skew-Seidel matrix as
\begin{equation}\label{eq:seidel_to_conference}
	C_\text{sym}=\begin{bmatrix}
		0 & \textbf{1}^T\\
		\textbf{1} & S
	\end{bmatrix} \qquad\text{and}\qquad	
	C_\text{skew}=\begin{bmatrix}
		0 & \textbf{1}^T\\
		-\textbf{1} & T
	\end{bmatrix}.
\end{equation}
For every odd prime power $q$, the Paley graph for $q\equiv 1 \pmod 4$, or the Paley digraph for $q\equiv 3 \pmod 4$, gives then rise to an ETF as follows. Starting from its adjacency matrix $A$, one forms the Seidel matrix $Q=J_q-I_q-2A$ or the skew-Seidel matrix $T=A-A^T$, then constructs the corresponding conference matrix via Eq.~\eqref{eq:seidel_to_conference}, and finally obtains the Gram matrix
\begin{equation}
	G=I_{q+1}+\frac{1}{\sqrt q}\,C_{\mathrm{sym}}
	\qquad\text{or}\qquad
	G=I_{q+1}+\frac{i}{\sqrt q}\,C_{\mathrm{skew}}.
\end{equation}
This yields an ETF of $q+1$ vectors in dimension $\frac{q+1}{2}$, real when $q\equiv 1 \pmod 4$ and complex when $q\equiv 3 \pmod 4$.

\subsection{ETF through doubling}

The previous example covers all cases in the main text except for $d=8$ and $d=11$. To obtain ETFs of size $2d$ in these remaining dimensions, we use the doubling construction described in \cite{ETF2d2024}. More precisely, we consider that if $S$ is the signature matrix of a $d\times n$ ETF with $n=2d+k$ and $k\in\{-1,0,1\}$ it is possible to take $c=(n-2d)\sqrt{\frac{n-1}{d(n-d)}}$ and $\beta=-c+\varepsilon i\sqrt{1-c^2}$, for some $\varepsilon\in\{-1,1\}$ such that 
\begin{equation}\label{eq:doubling_ETF}
	\widetilde S=
	\begin{bmatrix}
		S & S+\beta I_n\\
		S+\overline{\beta}I_n & -S
	\end{bmatrix}
\end{equation}
is the signature matrix of a complex $n\times 2n$ ETF. For $d=8$, we construct the Paley ETF of size $4\times 8$, obtained from the Paley digraph with $q=7$ through \textbf{Definition 5}, \textbf{Definition 4}, and Eq.~\eqref{eq:seidel_to_conference}. Applying Eq.~\eqref{eq:doubling_ETF} and retreiving the corresponding Gram matrix then yields a complex $8\times 16$ ETF. 

For $d=11$, the Paley construction is not available. In this case we instead use the Renes-Strohmer construction \cite{RENES2007}. Let $T$ be the skew-Seidel matrix of the Paley digraph on $q$ vertices with $q\equiv 3 \pmod{4}$, the matrix 
\begin{equation}
	G_{\mathrm{RS}}=\frac{1}{q+1}(J_q+q\mathds{1}_q+i\sqrt q\,T)
\end{equation}
is the Gram matrix of a complex ETF of $q$ vectors in dimension $\frac{q+1}{2}$. Since here the common angle is $\alpha=\frac{1}{\sqrt{q+1}}$, the corresponding signature matrix is
\begin{equation}\label{eq:renes_strohmer}
	S_{\mathrm{RS}}=\sqrt{q+1}\,(G_{\mathrm{RS}}-\mathds{1}_q)
	=\frac{1}{\sqrt{q+1}}\bigl(J_q-\mathds{1}_q+i\sqrt q\,T\bigr).
\end{equation}
To obtain the $11\times22$ ETF we only need to choose $q=11$ and use Eq.~\eqref{eq:doubling_ETF} applied to the result of Eq.~\eqref{eq:renes_strohmer}.

\end{document}